    \theoremstyle{plain}
        \newtheorem{theorem}{Theorem}
        \newtheorem{proposition}{Proposition}
    \theoremstyle{definition}
        \newtheorem{definition}{Definition}
    \theoremstyle{remark}
        \newtheorem{remark}{Remark}
\begin{document}

\title{Resurgence and holonomy of the $\phi^{2k}$ model in zero dimension}

\author{Fr\'ed\'eric Fauvet}
    \affiliation{IRMA, Universit{\'e} de Strasbourg, 7 rue Descartes,
    67084 Strasbourg Cedex, France.}
    \email{fauvet@math.unistra.fr}
\author{Fr\'ed\'eric Menous}
    \affiliation{Universit\'e Paris-Saclay, CNRS,
    Laboratoire de math\'ematiques d'Orsay, 91405, Orsay, France.}
    \email{frederic.menous@universite-paris-saclay.fr}
\author{Julien Qu\'eva}
    \affiliation{Universit\'e de Corse -- CNRS UMR 6134 SPE,
    Campus Grimaldi BP 52, 20250 Corte, France.}
    \email{queva@univ-corse.fr}

\date{\today}

\begin{abstract}
    We describe the resurgence properties of some partition functions
    corresponding to field theories in dimension 0. We show that these functions
    satisfy linear differential equations with polynomial coefficients and then
    use elementary stability results for holonomic functions to prove resurgence
    properties, enhancing previously known results on growth estimates for the
    formal series involved, which had been obtained through a delicate
    combinatorics.
\end{abstract}

\maketitle

% \tableofcontents

\section{Introduction}

The theory of resurgent functions and alien differential calculus has been
introduced in the late 70's,  and then developed single--handedly by Jean 
Ecalle,
in the field of singularities of dynamical systems with complex analytic data.
It was a major discovery that the divergent series appearing in the formal
solutions at a singular point, when expressed as expansions in a suitable
variable $z$ close to $\infty$, admit Borel transforms with isolated
singularities which can be analyzed with operators of a
new kind: the alien derivations.

Resurgent functions and alien calculus have made it possible to solve difficult 
problems of classification of dynamical systems and to tackle singular 
perturbation theory, in particular for the Schr{\"o}dinger equation, which 
display features of resurgence with respect to $\hbar$. Recently alien calculus 
has been applied in a number of domains of Theoretical Physics: Quantum 
Mechanics, perturbative
Quantum Field Theory, matrix models, topological strings, etc 
{\cite{ANICETO2019}}.

In the present text, we focus on the \emph{integrals} discussed by
Rivasseau et al \cite{rivasseau_loop_2010, lionni_note_2018, Rivasseau2018}
that correspond to scalar quantum field theories in dimension 0. That is, we
consider scalar fields $\phi \in \mathbb{R}$ and a Lagrangian
\[
\mathcal{L}_k (\phi) = - \frac{1}{2} \phi^2 - \lambda \phi^{^{2 k}},\quad 
(k \geqslant 2)
\]
with ``potential'' $V (\phi) = \lambda \phi^{2 k}$ and study the associated
``partition function'':
\[ Z_0^{} (\lambda) = \int_{\mathbb{R}} e^{- \frac{1}{2} \phi^2 - \lambda
    \phi^{2 k}} \frac{d \phi}{\sqrt{2 \pi}}. \]
We shall consider its related moments
\[ Z_{2 j} (\lambda) = \int_{\mathbb{R}} \phi^{2 j} e^{- \frac{1}{2} \phi^2
    - \lambda \phi^{2 k}} \frac{d \phi}{\sqrt{2 \pi}}, \quad (j \geqslant 1)
\]
and the ``Free energy'' $W (\lambda) = \log Z_0 (\lambda)$.

Classical results, reminded in section \ref{ClassicIntegrals} for a wider
class of potentials, that is polynomials $V \in \mathbb{R}_{2 k} [\phi]$ of
degree $2 k$ with positive dominant coefficient, already allow to prove the
following:

\begin{proposition}
    For a fixed $k \geqslant 1$, the integrals $Z_{2 j} (\lambda)$ are
    convergent for $\Re \lambda \geqslant 0$ and define continuous functions on
    the half-plane $\overline{S} = \left\{ \lambda \in \mathbb{C} ;\ \Re \lambda
    \geqslant 0 \right\}$ that are
    analytic in $S = \left\{ \lambda \in \mathbb{C};\ \Re \lambda > 0 \right\}$.
    Morevover these integrals have asymptotic
    expansions in $\overline{S}$: there exists $(a_n^j)_{n \geqslant 0}$ such 
    that,
    for all $N \geqslant 0$,
    \[ \lim_{\substack{\lambda \rightarrow 0\\\lambda \in S}}\ 
    \lambda^{- N} \left[ Z_{2 j} (\lambda) - \sum_{n = 0}^N
    a^j_n \lambda^n_{} \right] = 0. \]
\end{proposition}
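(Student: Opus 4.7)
The plan is to reduce everything to the uniform pointwise bound $|e^{-\lambda \phi^{2k}}| = e^{-\Re(\lambda)\phi^{2k}} \leq 1$, valid for all $\lambda \in \overline{S}$ and $\phi \in \mathbb{R}$, which produces the $\lambda$-independent integrable majorant $\phi^{2j} e^{-\phi^2/2}$ for the integrand. Absolute convergence of $Z_{2j}(\lambda)$ on $\overline{S}$ is then immediate, and continuity at any $\lambda_0 \in \overline{S}$ follows from dominated convergence applied to an arbitrary sequence $\lambda_n \to \lambda_0$ in $\overline{S}$.

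For analyticity on $S$ I would invoke Morera's theorem: for each fixed $\phi$ the integrand is entire in $\lambda$, so for any closed triangle $T \subset S$ Fubini (justified by the above majorant multiplied by the length of $T$) gives
\[
\oint_T Z_{2j}(\lambda)\, d\lambda = \int_{\mathbb{R}} \phi^{2j} e^{-\phi^2/2} \left( \oint_T e^{-\lambda \phi^{2k}}\, d\lambda \right) \frac{d\phi}{\sqrt{2\pi}} = 0,
\]
the inner contour integral vanishing by Cauchy's theorem. Alternatively one could differentiate directly under the integral sign, using the same majorant on compact subsets of $S$.

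For the asymptotic expansion the candidate coefficients are $a_n^j = \frac{(-1)^n}{n!} \int_{\mathbb{R}} \phi^{2j+2kn}\, e^{-\phi^2/2}\, \frac{d\phi}{\sqrt{2\pi}}$, obtained by formally expanding $e^{-\lambda \phi^{2k}}$. The key quantitative input is the integral form of the Taylor remainder
\[
e^z - \sum_{n=0}^N \frac{z^n}{n!} = \frac{z^{N+1}}{N!} \int_0^1 (1-t)^N e^{tz}\, dt,
\]
applied with $z = -\lambda \phi^{2k}$: since $\lambda \in \overline{S}$ gives $\Re(tz) \leq 0$ for $t \in [0,1]$, we have $|e^{tz}| \leq 1$, and hence the pointwise bound $|\lambda|^{N+1}\phi^{2k(N+1)}/(N+1)!$ for the remainder. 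Integrating against $\phi^{2j} e^{-\phi^2/2}/\sqrt{2\pi}$ yields
\[
\left| Z_{2j}(\lambda) - \sum_{n=0}^N a_n^j \lambda^n \right| \leq C_{N,j,k}\, |\lambda|^{N+1},
\]
which is stronger than the claimed $o(\lambda^N)$ behaviour. The main (and essentially only) obstacle is this remainder bound: the crude estimate $|e^z - \sum_{n \leq N} z^n/n!| \leq |z|^{N+1} e^{|z|}/(N+1)!$ is useless here because $|z| = |\lambda|\phi^{2k}$ is unbounded in $\phi$, and it is crucial to exploit the sign of $\Re(tz)$ via the integral form of the remainder. Everything else is routine measure-theoretic bookkeeping.
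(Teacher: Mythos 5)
Your proof is correct and follows essentially the same route as the paper's: the same $\lambda$-independent Gaussian majorant (here exploiting $\phi^{2k}\geq 0$, whereas the paper first reduces general $V$ to the nonnegative case by factoring out $e^{-\lambda m}$), dominated convergence (or its Morera repackaging) for continuity and analyticity, and the identical integral form of the Taylor remainder combined with $\Re(-\lambda\phi^{2k})\leq 0$ to get the $O(|\lambda|^{N+1})$ error bound. The key observation you flag as essential — that the sign of $\Re(tz)$ must be used, not the crude $e^{|z|}$ bound — is precisely the point the paper's proof of Theorem~3 is built around.
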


This is the classical definition of an asymptotic expansion, following
Ref.~\onlinecite{wasow} we note $\widetilde{Z_{2j}}$ the formal series $\sum_{n 
=
    0}^{\infty} a^j_n \lambda^n_{}$ that corresponds to the ``perturbative
expansion'' of $Z_{2 j} (\lambda)$. These are results already obtained in
Refs.~\onlinecite{rivasseau_loop_2010, lionni_note_2018} but it happens that
the partition function, as well as its asymptotic expansion, are solutions of
a differential equation, so that resurgence theory for the case of linear
differential equations will in the end provide more precise results on the
partition function and its logarithm.

As we shall see in section \ref{Ek}, we have the following:

\begin{theorem}
    \label{th:Ek}For the potential \ $V (\phi) = \phi^{2 k}$, $k \geqslant 1$,
    the partition function $Z_0$, as well as its asymptotic expansion, do
    satisfy the differential equation $(E_k)$
    \begin{equation}
    \left[ \left( \prod_{j = 0}^{k - 1} (2 k \lambda \partial_{\lambda} + 2 j
    + 1) \right) + \partial_{\lambda}  \right].Z_0 = 0.
    \end{equation}
\end{theorem}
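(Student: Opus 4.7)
The plan is to derive $(E_k)$ by combining two elementary families of identities among the moments $Z_{2j}(\lambda)$, and then transfer the result to the formal series by uniqueness of asymptotic expansions.

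First I would establish, for $\lambda \in S$, two basic relations. The exponential domination provided by $|e^{-\lambda \phi^{2k}}| = e^{-\Re\lambda\,\phi^{2k}}$ legitimates differentiation under the integral, giving
\[
\partial_\lambda Z_{2j}(\lambda) \;=\; -\,Z_{2j+2k}(\lambda).
\]
Integrating the vanishing total derivative $\int_\mathbb{R} \frac{d}{d\phi}\bigl(\phi^{2j+1} e^{-\phi^2/2 - \lambda\phi^{2k}}\bigr)\,d\phi = 0$ (boundary terms vanish by super-exponential decay) yields the Schwinger--Dyson-like recursion
\[
(2j+1)\,Z_{2j}\;-\;Z_{2j+2}\;-\;2k\lambda\,Z_{2j+2k}\;=\;0.
\]
Substituting $Z_{2j+2k} = -\partial_\lambda Z_{2j}$ into the recursion produces the crucial one-step identity $Z_{2j+2} = L_j\,Z_{2j}$, where $L_j := 2k\lambda\partial_\lambda + (2j+1)$.

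Iterating this identity from $j=0$ up to $j=k-1$ gives $Z_{2k} = L_{k-1}L_{k-2}\cdots L_0\,Z_0$. A small but essential observation is that the operators $L_j$ all commute, since each differs from the single operator $L := 2k\lambda\partial_\lambda$ by a scalar constant; consequently the product $\prod_{j=0}^{k-1} L_j$ is unambiguous and coincides with the composition just obtained. Using once more that $Z_{2k} = -\partial_\lambda Z_0$ and moving that term to the other side yields exactly $(E_k).Z_0 = 0$.

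Finally, for the formal series $\widetilde{Z_0}$: since $(E_k)$ has polynomial coefficients, applying it term-by-term to the asymptotic expansion of $Z_0$ yields an asymptotic expansion (in the sense of the proposition recalled above) of the analytic function $(E_k).Z_0$, which is identically zero on $S$. Uniqueness of asymptotic expansions forces the formal series $(E_k).\widetilde{Z_0}$ to vanish coefficient by coefficient. The main technical step is really the one-step identity $Z_{2j+2} = L_j Z_{2j}$; the rest is bookkeeping, and the only point where one might stumble is the ordering of the iterated product, which is resolved by the commutation remark above.
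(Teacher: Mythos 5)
Your proof is correct and follows essentially the same route as the paper: differentiation under the integral and an integration by parts give the two governing relations (Proposition~\ref{rec} in the paper), which combine into the one-step identity $Z_{2j+2} = (2k\lambda\partial_\lambda + 2j+1)Z_{2j}$, and iterating $k$ times together with $Z_{2k} = -\partial_\lambda Z_0$ yields $(E_k)$, with uniqueness of asymptotic expansions carrying the identity over to $\widetilde{Z_0}$. Your explicit remark that the factors $L_j$ commute (they differ by scalars) is a small but genuine clarification: the paper's iteration naturally produces $L_{k-1}\cdots L_0$ while the theorem is stated with the product indexed in increasing $j$, and the paper leaves this reconciliation implicit.
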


This equation completely determines the resurgence of the partition function and
consequently also the resurgence properties of the free energy $W = \log Z_0$.
The \emph{parameter}
$\lambda$ now plays the role of a \emph{variable} in a linear ordinary
differential equation with polynomial coefficients and we can make use of the
general theory of holonomic functions to derive results for the function
$\widetilde{Z_0} (\lambda)$.

Thus, we shall see below how to obtain without any combinatorics the
Gevrey--1 growth for the coefficients of the series $\widetilde{Z_0}
(\lambda)$, but also beyond that analyze the singularities of its Borel
transform and reach resurgence properties.

\begin{remark}
    The derivation of linear ODEs for some families of integrals depending on
    parameters is of course the topic of Picard--Fuchs theory, with its
    far reaching generalizations (Gauss--Manin connection); the present text 
    implements essentially elementary techniques, focussing on the resurgence 
    properties, for the family of integrals under consideration.
\end{remark}

We remind in section \ref{ClassicIntegrals} some analytic properties of
partition functions in dimension 0 that will be used in section \ref{Ek},
devoted to the different recursive relations between the partition function
and its moments so that we can give the proof of theorem \ref{th:Ek}. Sections
\ref{s:sing} and \ref{s:res} recall basic results on Borel--Laplace summation
of divergent series and resurgence theory respectively. In sections
\ref{s:resEk} and \ref{s:nonlin} we perform an analysis of the equations $(E_k)$
with the tools of alien calculus and then investigate the resurgence
properties of the free energy functions. In a last section we show how the
same techniques might apply to treat some particular cases of singular
perturbation problems for the Schr{\"o}dinger equation and we implement in
detail the method for the easy ``Airy case''.

\noindent\textbf{Data Availability Statement:} The data that supports the
findings of this study are available within the article.
 
\section{Properties of partition functions: estimates}\label{ClassicIntegrals}

We give here some general results for the partition function for a potential
$V (\phi) = \sum_{i = 0}^{2 k} v_i \phi^i$ with $k \geqslant 1$ and $v_{2 k} >
0$ and its related moments, noticing that all these functions share the same
integral shape
\[ \int_{\mathbb{R}} P (\phi) e^{- \frac{1}{2} \phi^2} e^{- \lambda V
    (\phi)} \frac{d \phi}{\sqrt{2 \pi}}, \]
where $P$ is a polynomial.

\subsection{Analytic properties}

\begin{theorem}
    \label{conv}Let $V (\phi) = \sum_{i = 0}^{2 k} v_i \phi^i$ with $k \geqslant
    1$ and $v_{2 k} > 0$ and $P \in \mathbb{R} [\phi]$, the integral
    \[ \int_{\mathbb{R}} P (\phi) e^{- \frac{1}{2} \phi^2} e^{- \lambda V
        (\phi)} \frac{d \phi}{\sqrt{2 \pi}} \]
    is convergent for any $\lambda$ in the closed sector $\overline{S_{}} =
    \left\{ \lambda \in \mathbb{C} ;\ \Re \lambda \geqslant 0 \right\}$
    and defines a function $f_{P, V}$ continuous on
    $\overline{S}$ and analytic in
    $S = \left\{ \lambda \in \mathbb{C};\ \Re \lambda > 0 \right\}$.
    Moreover, for any $n \in \mathbb{N}$ and $\lambda \in S$,
    \[ f_{P, V}^{(n)} (\lambda) = \int_{\mathbb{R}} P (\phi) (- V (\phi))^n
    e^{- \frac{1}{2} \phi^2} e^{- \lambda V (\phi)} \frac{d \phi}{\sqrt{2
            \pi}} . \]
\end{theorem}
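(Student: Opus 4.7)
The plan is to base everything on a careful dominated-convergence argument, exploiting two separate features: the Gaussian factor $e^{-\phi^2/2}$, which is already integrable by itself against any polynomial, and the factor $e^{-\lambda V(\phi)}$, whose modulus $e^{-\Re\lambda\, V(\phi)}$ provides super-polynomial decay as soon as $\Re\lambda>0$, because $V$ has even degree $2k$ with positive leading coefficient $v_{2k}>0$.

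The first step is the key a priori bound. Since $V$ is continuous on $\mathbb{R}$ and $V(\phi)\to+\infty$ as $|\phi|\to\infty$, there exists a constant $C_V$ with $V(\phi)\geqslant -C_V$ for all real $\phi$. For $\lambda\in\overline{S}$ we then have $|e^{-\lambda V(\phi)}|=e^{-\Re\lambda\,V(\phi)}\leqslant e^{C_V\Re\lambda}$, so the integrand is pointwise dominated by $e^{C_V\Re\lambda}|P(\phi)|e^{-\phi^2/2}$. This gives absolute convergence on all of $\overline{S}$, and on any compact subset of $\overline{S}$ the bound is uniform; since $\lambda\mapsto e^{-\lambda V(\phi)}$ is continuous for each fixed $\phi$, the standard Lebesgue dominated convergence theorem then yields continuity of $f_{P,V}$ on $\overline{S}$.

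Next, to obtain analyticity on $S$ and the differentiation-under-the-integral formula, I would fix $\lambda_0\in S$ and work on a closed disk $\overline{D}\subset S$ containing $\lambda_0$ on which $\Re\lambda\geqslant\delta$ for some $\delta>0$. Because $V(\phi)\sim v_{2k}\phi^{2k}$ at infinity, there exists $R>0$ such that $V(\phi)\geqslant \frac{v_{2k}}{2}\phi^{2k}$ for $|\phi|\geqslant R$; on $|\phi|\leqslant R$ both $V$ and $P$ are bounded. Therefore, for every $\lambda\in\overline{D}$,
\[
\bigl|P(\phi)V(\phi)^n e^{-\phi^2/2}e^{-\lambda V(\phi)}\bigr|
\leqslant |P(\phi)|\,|V(\phi)|^n e^{-\phi^2/2}\cdot
\begin{cases} e^{\delta C_V},&|\phi|\leqslant R,\\ e^{-\frac{\delta v_{2k}}{2}\phi^{2k}},&|\phi|>R,\end{cases}
\]
which is integrable and independent of $\lambda\in\overline{D}$. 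The classical theorem on holomorphic dependence on a parameter (or Morera plus Fubini, after integrating over a triangle in $\lambda$) then shows that $f_{P,V}$ is holomorphic on $\overline{D}$, and that the derivative is obtained by differentiating under the integral sign, producing the extra factor $-V(\phi)$. An induction on $n$, using the same dominating function for each fixed disk, yields the stated formula for $f_{P,V}^{(n)}$ on all of $S$.

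The main obstacle is purely bookkeeping: one must produce a single integrable dominating function that is simultaneously uniform in $\lambda$ on the region of interest and strong enough to absorb the extra polynomial factors $V(\phi)^n$ appearing in the derivatives. The split between $|\phi|\leqslant R$ and $|\phi|>R$, together with the separation of the roles played by $e^{-\phi^2/2}$ (which already handles the boundary $\Re\lambda=0$, where no decay from $e^{-\lambda V}$ is available) and by $e^{-\delta V(\phi)}$ (which handles the derivatives for $\Re\lambda\geqslant\delta>0$), is what makes the argument work. No finer tool seems necessary.
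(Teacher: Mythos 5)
Your argument is correct in its essentials, but it is more elaborate than the paper's and misses the paper's central simplification. The paper first factors out $e^{-\lambda m}$ where $m = \min_{\mathbb{R}} V$, thereby reducing to the case $V \geqslant 0$ on $\mathbb{R}$; then $|e^{-\lambda V(\phi)}| \leqslant 1$ for every $\lambda \in \overline{S}$, so the single dominating function $g(\phi) = |P(\phi)| e^{-\phi^2/2}$ works uniformly on the whole closed half-plane (not just on compacts), and for the $n$-th derivative the dominating function $|P(\phi) V(\phi)^n| e^{-\phi^2/2}$ is a polynomial times a Gaussian and is therefore integrable without any help from $e^{-\lambda V}$. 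You actually state this fact yourself at the outset (``$e^{-\phi^2/2}$ is already integrable by itself against any polynomial''), but you do not exploit it: the split into $|\phi| \leqslant R$ and $|\phi| > R$ and the invocation of the decay $e^{-\delta V(\phi)} \leqslant e^{-\delta v_{2k}\phi^{2k}/2}$ are superfluous here, since the Gaussian alone already dominates the extra polynomial factors $V(\phi)^n$. So both approaches succeed, but yours genuinely uses the decay coming from $\Re\lambda \geqslant \delta$ whereas the paper shows it is not needed; what the paper's reduction buys is a cleaner, $\lambda$-independent majorant valid on all of $\overline{S}$ at once. One small slip: on $\overline{D}$ you have $\Re\lambda$ ranging in some interval $[\delta, M]$, so the bound on $|\phi|\leqslant R$ should be $e^{M C_V}$ (the supremum of $e^{\Re\lambda\, C_V}$ over $\overline{D}$), not $e^{\delta C_V}$, assuming $C_V>0$; this does not affect integrability, but the constant as written is not a valid majorant.
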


\begin{proof}
    The proof is based on variations on Lebesgue's dominated convergence
    theorem. We can first observe that $V$ has a minimum $m$ on $\mathbb{R}$ so
    that:
    \[ \forall \phi \in \mathbb{R},\ \forall \lambda \in \overline{S},
    \quad \left| P (\phi) e^{- \frac{1}{2} \phi^2} e^{- \lambda V (\phi)}
    \right| \leqslant | P (\phi) | e^{- \Re \lambda .m} e^{- \frac{1}{2}
        \phi^2} \]
    and $f_{P, V}$ is well defined. In order to prove the continuity and the
    analyticity of $f_{P, V}$, since $f_{P, V} = e^{- \lambda m} f_{P, V - m}$,
    we can assume without loss of generality that $V$ is non negative on the
    real axis.
    
    Following Ref.~\onlinecite[chapter XIII, section 8]{Dieud}, the map
    \[ \begin{array}{ccccl}
    F & : & \mathbb{R} \times \overline{S} & \rightarrow & \mathbb{C}\\
    &  & (\phi, \lambda) & \mapsto & P (\phi) e^{- \frac{1}{2} \phi^2}
    e^{- \lambda V (\phi)}
    \end{array} \]
    is such that:
    \begin{enumerate}
        \item For any $\lambda \in \overline{S}$, the function
        $\phi \mapsto F (\phi, \lambda)$ is integrable on $\mathbb{R}$.
        
        \item For any $\phi \in \mathbb{R}$, the function $\lambda \mapsto F
        (\phi, \lambda)$ is continuous on $\overline{S}$ and analytic on $S$.
        
        \item For any $\lambda \in \overline{S}$, and any $\phi \in \mathbb{R}$
        \[ \left| P (\phi) e^{- \frac{1}{2} \phi^2} e^{- \lambda V (\phi)} 
        \right|
        \leqslant | P (\phi) | e^{- \frac{1}{2} \phi^2} = g (\phi), \]
        where $g$ is integrable on $\mathbb{R}$.
    \end{enumerate}
    Thanks to Lebesgue's dominated convergence theorem, it automatically ensures
    that $f_{P, V}$ is continuous on $\overline{S}$, analytic on $S$ with the
    attempted formulas for its derivatives.
\end{proof}

One can also observe that $f_{P, V}$ is indeed $\mathcal{C}^{\infty}$ on $[0,
+ \infty [$ and its derivatives at $\lambda = 0$ are given, for $n \geqslant
0$, by
\[
f_{P, V}^{(n)} (0) = \int_{\mathbb{R}} P (\phi) (- V (\phi))^n e^{-
    \frac{1}{2} \phi^2} \frac{d \phi}{\sqrt{2 \pi}}, \]
that are linear combination of gaussian moments:
\[ a_j = \int_{\mathbb{R}} \phi^j e^{- \frac{1}{2} \phi^2} \frac{d
    \phi}{\sqrt{2 \pi}} = \text{} \left\{ \begin{array}{ll}
0 & \text{if $j$ is odd}\\
\frac{j!}{2^{j / 2} (j / 2) !} & \text{if $j$ is even}
\end{array} \right. . \]

This suggests that $f_{P, V}$ has an asymptotic expansion when $\lambda
\rightarrow 0$ in $S$ (see for example Ref.~{\onlinecite[chapter III]{wasow}}).

\subsection{Asymptotics expansions}

\begin{theorem}
    \label{DA}Let $V (\phi) = \sum_{i = 0}^{2 k} v_i \phi^i$ with $k \geqslant
    1$ and $v_{2 k} > 0$, $P \in \mathbb{R} [\phi]$, and for $\Re \lambda
    \geqslant 0$ and $n \geqslant 0$
    \begin{align*}
    f_{P, V} (\lambda) &= \int_{\mathbb{R}} P (\phi) e^{- \frac{1}{2} \phi^2}
    e^{- \lambda V (\phi)} \frac{d \phi}{\sqrt{2 \pi}}, \\
    \alpha_n = \frac{f_{P, V}^{(n)} (0)}{n!}
    &= \int_{\mathbb{R}} \frac{1}{n!} P (\phi)
    (- V (\phi))^n e^{- \frac{1}{2} \phi^2} \frac{d \phi}{\sqrt{2 \pi}} .
    \end{align*}
    For any positive integer N
    \[ \lim_{\substack{\lambda \rightarrow 0\\ \lambda \in S}}\
    \lambda^{- N} \left[ f_{P, V} (\lambda) - \sum_{n = 0}^N
    \alpha_n \lambda^n_{} \right] = 0. \]
\end{theorem}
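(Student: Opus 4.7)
The plan is to build on theorem \ref{conv} by doing a Taylor expansion of $e^{-\lambda V(\phi)}$ inside the integral and controlling the remainder uniformly in $\phi$. First, exactly as in the previous proof, using the minimum $m$ of $V$ on $\mathbb{R}$, we can write $f_{P,V}(\lambda) = e^{-\lambda m} f_{P,V-m}(\lambda)$. Since $\lambda \mapsto e^{-\lambda m}$ is entire, it has a convergent Taylor series at $0$, and multiplication of an asymptotic expansion by a convergent series preserves the asymptotic property. We may therefore assume without loss of generality that $V \geqslant 0$ on $\mathbb{R}$.

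The key estimate is then the Taylor remainder for the exponential. For any $z \in \mathbb{C}$ with $\Re z \geqslant 0$,
\[
e^{-z} = \sum_{n=0}^{N} \frac{(-z)^n}{n!} + \frac{(-z)^{N+1}}{N!}\int_0^1 (1-t)^N e^{-tz}\,dt,
\]
and the bound $|e^{-tz}| = e^{-t\Re z} \leqslant 1$ for $t \in [0,1]$ gives
\[
\left| e^{-z} - \sum_{n=0}^{N} \frac{(-z)^n}{n!} \right| \leqslant \frac{|z|^{N+1}}{(N+1)!}.
\]
Applying this pointwise with $z = \lambda V(\phi)$ (which has nonnegative real part by the reduction) yields
\[
\left| e^{-\lambda V(\phi)} - \sum_{n=0}^{N} \frac{(-\lambda V(\phi))^n}{n!} \right|
\leqslant \frac{|\lambda|^{N+1} V(\phi)^{N+1}}{(N+1)!}.
\]

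Multiplying by $P(\phi) e^{-\phi^2/2}/\sqrt{2\pi}$ and integrating over $\mathbb{R}$ — the exchange of sum and integral in the polynomial part is immediate since it is a finite sum, and the dominating function $|P(\phi)|V(\phi)^{N+1}e^{-\phi^2/2}$ is integrable because $|P|\cdot V^{N+1}$ is polynomial and is crushed by the Gaussian weight — gives
\[
\left| f_{P,V}(\lambda) - \sum_{n=0}^{N} \alpha_n \lambda^n \right|
\leqslant \frac{|\lambda|^{N+1}}{(N+1)!}\int_{\mathbb{R}} |P(\phi)|\, V(\phi)^{N+1}\, e^{-\phi^2/2}\,\frac{d\phi}{\sqrt{2\pi}}
= C_N\, |\lambda|^{N+1},
\]
with $C_N < \infty$. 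Dividing by $\lambda^N$ and letting $\lambda \to 0$ in $S$ gives the claim, since $|\lambda|^{N+1}/|\lambda|^N = |\lambda| \to 0$.

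There is no real obstacle here: the only subtlety is ensuring the uniform Taylor remainder bound, which requires $\Re(\lambda V(\phi)) \geqslant 0$ pointwise, and that is precisely what the reduction to $V \geqslant 0$ buys us. All other steps are standard dominated-convergence arguments of the same flavor as in theorem \ref{conv}.
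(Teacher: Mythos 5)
Your proof is correct and follows essentially the same route as the paper's: reduce to $V \geqslant 0$ via the factorization $f_{P,V} = e^{-\lambda m} f_{P,V-m}$, apply the integral form of the Taylor remainder for the exponential with the bound $|e^{t\zeta}| \leqslant 1$ when $\Re\zeta \leqslant 0$, integrate against the Gaussian weight, and divide by $\lambda^N$. The only superficial difference is that you bound the remainder directly by $|\lambda|^{N+1}$ before dividing, whereas the paper divides first and bounds $|Q_N(\lambda)|$ by a constant times $|\lambda|$; these are the same calculation.
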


We note $\widetilde{f}_{P, V} (\lambda)$ the formal series $\sum_{n = 
0}^{\infty}
\alpha_n \lambda^n_{}$ which is the asymptotic expansion of $f_{P, V}$ when
$\lambda \in S$, $\lambda \rightarrow 0$ and write
\[  f_{P, V} (\lambda) \thicksim_{} \sum_{n = 0}^{\infty} \alpha_n \lambda^n, 
\quad \lambda \in S,
\quad \lambda \rightarrow 0. \]
Since, for $n \geqslant 0$,
\[ f_{P, V}^{(n)} (\lambda) = \int_{\mathbb{R}} P (\phi) (- V (\phi))^n e^{-
    \frac{1}{2} \phi^2} e^{- \lambda V (\phi)} \frac{d \phi}{\sqrt{2 \pi}}, \]
it is a matter of fact to check that
\[ f_{P, V}^{(n)} (\lambda) \thicksim_{} \widetilde{f}_{P, V}^{(n)} (\lambda),
\quad \lambda \in S,
\quad \lambda \rightarrow 0, \]
where $\widetilde{f}_{P, V}^{(n)}$ is the $n$-th formal derivative of the formal
series $\widetilde{f}_{P, V} (\lambda)$.

\begin{proof}
    Since the asymptotic expansion of a product of functions is the product of
    their asymptotic expansions\cite{wasow}, we can assume as in the
    previous proof that $V$ is non negative on $\mathbb{R}$. Otherwise, if the
    minimum $m$ of $V$ is negative, we write $f_{P, V} = e^{- \lambda m} f_{P, V
        - m}$, where $e^{- \lambda m}$ has an obvious asympotic expansion as an
    entire function of $\lambda$, this last result can be obtained with the help
    of the Taylor formula that will be useful to achieve this proof:
    \[ \forall N \geqslant 0,\quad \forall \zeta \in \mathbb{C}, \quad
    R_N (\zeta) = e^{\zeta} - \sum_{n = 0}^N \frac{\zeta^n}{n!} = \zeta^{N +
        1} \int_0^1 \frac{(1 - t)^N}{N!} e^{t \zeta} d t \]
    that leads to
    \[ \forall N \geqslant 0, \quad \forall \zeta \in \mathbb{C}, \quad
    | R_N (\zeta) | = \left| e^{\zeta} - \sum_{n = 0}^N \frac{\zeta^n}{n!}
    \right| \leqslant \frac{| \zeta |^{N + 1}}{(N + 1) !} \max \{ 1, e^{\Re
        \zeta} \} . \]
    For $\lambda \in S$, we have
    \[ Q_N (\lambda) = \lambda^{- N} \left[ f_{P, V} (\lambda) - \sum_{n = 0}^N
    \alpha_n \lambda^n_{} \right] = \lambda^{- N} \int_{\mathbb{R}} P (\phi)
    R_N (- \lambda V (\phi)) e^{- \frac{1}{2} \phi^2} \frac{d \phi}{\sqrt{2
            \pi}} \]
    thus
    \[ | Q_N (\lambda) | \leqslant \frac{| \lambda |}{(N + 1) !}
    \int_{\mathbb{R}} | P (\phi) .V (\phi)^{N + 1} | e^{- \frac{1}{2}
        \phi^2} \frac{d \phi}{\sqrt{2 \pi}} \]
    and this ends the proof of theorem.
\end{proof}

For such potentials $V$, the asymptotic expansion $\widetilde{f}_{1, V}$ is, in
physicists' terms, the perturbative expansion of the partition function $f_{1,
    V}$. A crucial remark for the sequel is that the asymptotic expansion of a
function is unique so that, if we have functions satisfying a linear recursive
equation or a linear differential equation, so do the formal series
corresponding to their respective asymptotic expansions (see
Ref.~{\onlinecite[chapter III]{wasow}}, for more properties on asymptotic
expansions).
 
\section{Governing equations for partition functions}\label{Ek}

\subsection{Recursive and differential relations for the moments}

Let $V (\phi) = \sum_{i = 0}^{2 k} v_i \phi^i$ with $k \geqslant 1$ and $v_{2
    k} > 0$ a given potential and its relative moments,
\[ Z_j (\lambda) = \int_{\mathbb{R}} \phi^j e^{- \frac{1}{2} \phi^2} e^{-
    \lambda V (\phi)} \frac{d \phi}{\sqrt{2 \pi}} \]
whose asymptotics expansions are noted $\widetilde{Z}_j (\lambda)$. The key
properties of these moments are the following:

\begin{proposition}
    \label{rec}The family of functions $(Z_j)_{j \geqslant 0}$ as well as the
    family of formal series $(\widetilde{Z}_j)_{j \geqslant 0}$ satisfy the
    following relations:
    \begin{equation}
    \forall j \geqslant 0, \quad Z'_j = - \sum_{i = 0}^{2 k} v_i Z_{j + i}, 
    \label{rec1}
    \end{equation}
    \begin{equation}
    \forall j \geqslant 0, \quad (j + 1) Z_j = Z_{j + 2} + \lambda \sum_{i =
        1}^{2 k} i v_i Z_{i + j}. \label{rec2}
    \end{equation}
\end{proposition}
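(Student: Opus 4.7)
The plan is to establish both relations at the level of the integrals $Z_j$ and then transport them to the formal series $\widetilde{Z}_j$ using the compatibility of asymptotic expansions with linear operations and differentiation (as noted at the end of section \ref{ClassicIntegrals}).

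For the first relation \eqref{rec1}, I would apply Theorem \ref{conv} to the polynomial $P(\phi)=\phi^j$: differentiation under the integral sign is legitimate and yields
\[
Z'_j(\lambda) = \int_{\mathbb{R}} \phi^j (-V(\phi)) e^{-\frac{1}{2}\phi^2} e^{-\lambda V(\phi)} \frac{d\phi}{\sqrt{2\pi}}.
\]
Expanding $V(\phi)=\sum_{i=0}^{2k} v_i \phi^i$ inside the integral and recognizing each resulting integral as $Z_{j+i}$ gives exactly \eqref{rec1}.

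For the second relation \eqref{rec2}, the natural tool is integration by parts on the exact differential
\[
\frac{d}{d\phi}\Bigl[\phi^{j+1} e^{-\frac{1}{2}\phi^2} e^{-\lambda V(\phi)}\Bigr]
= \Bigl[(j+1)\phi^j - \phi^{j+2} - \lambda \phi^{j+1} V'(\phi)\Bigr] e^{-\frac{1}{2}\phi^2} e^{-\lambda V(\phi)}.
\]
The boundary terms vanish because the Gaussian factor $e^{-\phi^2/2}$ dominates the polynomial growth at $\pm\infty$ (and $\Re \lambda \geq 0$ ensures $|e^{-\lambda V(\phi)}|$ is bounded once we shift $V$ by its minimum, as in the proof of Theorem \ref{conv}). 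Integrating the identity above over $\mathbb{R}$ and substituting $V'(\phi)=\sum_{i=1}^{2k} i v_i \phi^{i-1}$ produces
\[
0 = (j+1) Z_j(\lambda) - Z_{j+2}(\lambda) - \lambda \sum_{i=1}^{2k} i v_i Z_{j+i}(\lambda),
\]
which is \eqref{rec2}.

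Finally, to pass from the integrals to the formal series $\widetilde{Z}_j$, I would invoke the remarks following Theorem \ref{DA}: the asymptotic expansion of $Z'_j$ is the formal derivative $\widetilde{Z}_j{}'$, asymptotic expansions are unique, and they behave linearly (and multiplication by the analytic function $\lambda$ commutes with taking asymptotics). Applying these properties to the already-established identities for $Z_j$ yields the same relations for $\widetilde{Z}_j$. The only mildly delicate point in the whole argument is justifying the vanishing of the boundary term in the integration by parts uniformly in $\lambda \in \overline{S}$, but the exponential Gaussian decay makes this routine.
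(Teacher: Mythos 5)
Your proof is correct and follows essentially the same route as the paper's: differentiation under the integral sign for \eqref{rec1} and integration by parts for \eqref{rec2}, then transfer to the formal series by uniqueness of asymptotic expansions. The only cosmetic difference is that you integrate the exact differential $\frac{d}{d\phi}[\phi^{j+1}e^{-\phi^2/2}e^{-\lambda V}]$ directly, whereas the paper writes $(j+1)\phi^j=(\phi^{j+1})'$ and integrates by parts in the standard form; these are the same computation.
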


We call these relations the governing equations associated to the potential
$V$.

\begin{proof}
    The proof is quite elementary: the first equation is a simple
    differentiation under the integral whereas the second one is a simple
    integration by part. For all $j \geqslant 0$, we have
    \begin{align*}
    (j + 1) Z_j
    & =  \int_{\mathbb{R}} (j + 1) \phi^j e^{- \frac{1}{2}
        \phi^2} e^{- \lambda V (\phi)} \frac{d \phi}{\sqrt{2 \pi}}
    =  \int_{\mathbb{R}} (\phi^{j + 1})' e^{- \frac{1}{2} \phi^2} e^{-
        \lambda V (\phi)} \frac{d \phi}{\sqrt{2 \pi}} \\
    & = \left[ \phi^{j + 1} e^{- \frac{1}{2} \phi^2} e^{- \lambda V
        (\phi)} \right]_{- \infty}^{+ \infty}
    +  \int_{\mathbb{R}} \phi^{j + 1} (\phi + \lambda V' (\phi))
    e^{- \frac{1}{2} \phi^2} e^{- \lambda V
        (\phi)} \frac{d \phi}{\sqrt{2 \pi}} \\
    & =  Z_{j + 2} + \lambda \sum_{i = 1}^{2 k} i v_i Z_{i + j}
    \end{align*}
    The uniqueness of asymptotic expansion ensures that the formal series
    $(\widetilde{Z}_j)_{j \geqslant 0}$ have the same properties.
\end{proof}

Let us focus now on the case $V (\phi) = \phi^{2 k}$ for which these recursive
relations will provide us a fundamental differential equation for the
partition function.

\subsection{The case $V (\phi) = \phi^{2 k}$, $k \geqslant 1$}

In this particular case, for any odd $j$, $Z_j \equiv 0$ so that one can focus
on the integrals
\begin{equation*}
\forall j \in \mathbb{N}, \quad
\forall \lambda \in S, \quad
U_j(\lambda) = \int_{\mathbb{R}} \phi^{2 j} e^{- \frac{1}{2} \phi^2} e^{-
    \lambda \phi^{2 k}} \frac{d \phi}{\sqrt{2 \pi}}
\end{equation*}
and the above recursive relations, \eqref{rec1} and \eqref{rec2}, read:
\begin{equation}
\label{eq:SetEq}
\forall j \geqslant 0, \quad \left\{
\begin{aligned}
\partial_{\lambda} U_j & =  - U_{j + k}\\
(2 j + 1) U_j & = U_{j + 1} + 2 k \lambda U_{j + k}
\end{aligned}
\right.
\end{equation}

In the above set \eqref{eq:SetEq} the second line can be
recognized to be the Schwinger-Dyson equations written on the unnormalized
even green functions with $U_j = Z_{2j} = Z_0 G_{2j}$ (e.g. see
Refs.~\onlinecite{Bender:1988bp,Okopinska:1990pt} where those are addressed by 
truncation
for $k=2$).
The first equation in \eqref{eq:SetEq}, however, is of a kind akin to a
Step-$k$ equation as seen in Ref.~\onlinecite{Argyres:2001sa}.

Combining both equations from \eqref{eq:SetEq}, namely plugging the right hand
side of the first into the second, brings about 
\begin{equation*}
\forall j \geqslant 0, \quad
(2 k \lambda \partial_{\lambda} + 2 j + 1) U_j = U_{j + 1}
\end{equation*}
that leads to

\begin{theorem}
    For the potential \ $V (\phi) = \phi^{2 k}$, $k \geqslant 1$, the partition
    function $U_0$, as well as its asymptotic expansion, do satisfy the
    differential equation $(E_k)$
    \begin{equation*}
    \left[ \left( \prod_{j = 0}^{k - 1} (2 k \lambda \partial_{\lambda} + 2 j
    + 1) \right) + \partial_{\lambda}  \right] .U_0 = 0
    \end{equation*}
\end{theorem}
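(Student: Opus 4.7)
The plan is to use the two relations in \eqref{eq:SetEq} exactly as the authors have already set them up, plus the intermediate identity
\[
(2k\lambda\partial_\lambda + 2j+1)U_j = U_{j+1}
\]
which is derived in the paragraph immediately preceding the theorem. Write $L_j := 2k\lambda\partial_\lambda + 2j+1$. The first step is to observe that the $L_j$ all commute pairwise, since $L_j - L_{j'} = 2(j-j')$ is a scalar; this makes the non-commutative product $\prod_{j=0}^{k-1} L_j$ unambiguously defined and allows me to apply the factors in whatever order is convenient.

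Next, I would prove by a short induction on $m\in\{0,1,\dots,k\}$ that
\[
L_{m-1}\,L_{m-2}\cdots L_0 \, U_0 \;=\; U_m,
\]
the base case $m=0$ being trivial and the induction step being a direct application of $L_m U_m = U_{m+1}$. Taking $m=k$ gives
\[
\Bigl(\prod_{j=0}^{k-1} L_j\Bigr) U_0 \;=\; U_k.
\]

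The final step is to invoke the first line of \eqref{eq:SetEq} with $j=0$, namely $\partial_\lambda U_0 = -U_k$, and substitute to obtain
\[
\Bigl(\prod_{j=0}^{k-1} (2k\lambda\partial_\lambda + 2j+1)\Bigr) U_0 \;=\; -\partial_\lambda U_0,
\]
which is exactly the equation $(E_k)$. The corresponding statement for the asymptotic series $\widetilde{U}_0$ then follows from Theorem~\ref{DA} together with the uniqueness of asymptotic expansions, as already emphasized at the end of Section~\ref{ClassicIntegrals}: any linear differential relation with polynomial coefficients satisfied by $U_0$ is also satisfied term-by-term by $\widetilde{U}_0$.

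There is essentially no obstacle here; the only point that deserves a line of justification is the commutativity of the operators $L_j$, so that the product symbol in the statement has an intrinsic meaning independent of ordering. Everything else is a two-line induction combined with a single substitution.
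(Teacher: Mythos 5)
Your proof is correct and follows essentially the same route as the paper: apply the intermediate identity $(2k\lambda\partial_\lambda + 2j+1)U_j = U_{j+1}$ iteratively to obtain $\bigl(\prod_{j=0}^{k-1}L_j\bigr)U_0 = U_k$, then substitute $U_k = -\partial_\lambda U_0$ from the first recursion, and pass to the asymptotic expansion by uniqueness. The extra remark about the $L_j$ commuting (so the product is unambiguous) is a nice touch but the paper applies the factors in a fixed order anyway, so no further argument is needed.
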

\begin{proof}
    The proof is straightforward using \eqref{rec1} and \eqref{rec2}:
    \[ \left( \prod_{j = 0}^{k - 1} (2 k \lambda \partial_{\lambda} + 2 j + 1)
    \right) .U_0 = U_k = - \partial_{\lambda} U_0 . \qedhere \]
\end{proof}
This equation $(E_k)$ will provide all the information on the resurgence of
the partition function.

\section{Divergent series and ODEs}\label{s:sing}

\subsection{Singularities of linear ODEs}

In the present article, we shall be concerned with linear ordinary
differential equations with polynomial or rational coefficients; it is in this
context that we now briefly recall some of the basic elements of the theory of
irregular singularities of ODEs with analytic data.

Let us consider such a linear ODE, for a complex variable $x$:
\begin{equation}
a_n (x) y^{(n)} (x) + \ldots + a_1 (x) y' (x) + a_0 (x) y (x) = 0,
\label{e:edolin}
\end{equation}
where the coefficient functions $a_i (x)$ belong to $\mathbb{C} [x]$.

Each point $x_0$ in the complex plane which is not a root of the leading
coefficient $a_n (x)$ is \emph{regular}: at such a point, Cauchy--Lipschitz
theorem applies and we have a basis of the vector space of solutions of 
\eqref{e:edolin}
composed of functions which are analytic near $x_0$.

Moreover, it is the specificity of linear equations that all the local
solutions can be continued along any path $\gamma$ starting at $x_0$ that
avoids the \emph{singular points} of the equation -- namely the roots of
$a_n$ and this global property of the solutions will be of constant use later
on. Singular points of these linear ODEs can be of 2 types:

-- \emph{Regular--singular}, around which a basis of solution will only
involve analytic functions, possibly ramified at the singular point, together
with logarithms, that is, when $x_0=0$, solutions:
\begin{equation*}
y(x) = x^{\beta}\left(\sum_{i=0}^m\Phi_i(x)\log^i x\right),\
\beta\in \mathbb{C},\
\Phi_i \in \mathbb{C}\lbrace x\rbrace.
\end{equation*}

-- \emph{Irregular--singular}, for which, when $x_0 = 0$,  local solutions will 
concomitantly
involve exponentials of so-called determining factors $e^{q (x^{- \nu})}$ (with 
$q (u) \in
\mathbb{C} [u ]$) and divergent series,  that is solutions:
\begin{equation*}
y(x) = x^{\beta} e^{q (x^{- \nu})}\left(\sum_{i=0}^m\Phi_i(x)\log^i x\right),
\ \beta\in \mathbb{C},
\ \Phi_i \in \mathbb{C}[[ x ]],
\end{equation*}
with a growth order of Gevrey--type for the formal series $\Phi_i$ in the 
variable $z=x^{-1}$:

\begin{definition}
    A formal series $\sum_{n \geqslant 0} a_n z^{- n} \in \mathbb{C} [[z^{-
        1}]]$ is said to be Gevrey of order $s$, where $s$ is a positive real 
        number
    (Gevrey--s for short, or of level $q = \frac{1}{s}$) iff:
    \[ \exists C, A > 0, \text{such that}\  \forall n \geqslant 0,\
    | a_n | \leqslant C A^n  (n!)^s. \]
\end{definition}

Formal series with such factorial growth estimates have been
systematically studied by Maurice Gevrey during the 1910' for PDEs. In the
context of linear ODEs a number of results involving these ``Gevrey series''
have notably been obtained by Maillet and Perron; they are ubiquitous in
solutions of linear and non linear differential equations with meromorphic
data, as shown in many works since the seminal work of Ramis\cite{RAM78},
which marked a revival of the topic.

Generically, such Gevrey series that are formal solutions of an ODE at an
irregular--singular point will be divergent, yet there exists a standard
process to express with them analytic solutions \emph{in sectors} at the
singular point: Borel--Laplace summation.

\subsection{Borel, Laplace and Stokes}

The paradigmatic example\cite{LR17} of an ODE with an irregular singularity is 
the
so--called Euler equation, which can be given in the following form
\begin{equation}
x^2 f' (x) = - f (x) + x.
\label{e:euler}
\end{equation}
This equation has a singularity at the origin $x = 0$ and we find there a
unique formal series solution
\[ \widetilde{f_0}  (x) = \sum_{n= 0}^\infty\ (- 1)^n n! x^{n + 1}. \]
This formal series is Gevrey--1 and divergent. The general formal solution of
\eqref{e:euler} is
\[ \Phi (x) = \widetilde{f_0}  (x) + \sigma e^{\frac{1}{x}}, \qquad
(\sigma \in \mathbb{C}). \]
On this simple example we indeed notice the simultaneous presence of Gevrey--1
divergence and of an exponential factor $e^{\frac{1}{x}}$. Equation
\eqref{e:euler} can be given in linear homogeneous form: after division by $x$,
derivation and finally multiplication by $x^2$ we obtain the following linear
second order ODE
\begin{equation}
x^3 f'' (x) + (x^2 + x) f' (x) - f (x) = 0
\label{e:eulerlin} 
\end{equation}
and the vector space of solutions of \eqref{e:eulerlin} is spanned by
$\widetilde{f_0}  (x)$ and $e^{\frac{1}{x}}$.

As a function of the variable $x$ close to the origin, $e^{\frac{1}{x}}$ will
have a very different behaviour according to the direction along which $x
\longrightarrow 0$: it will be explosive when $\Re x$ is positive,
will vanish when it is negative and oscillate when $x$ is purely imaginary.

This completely elementary observation is in fact crucial: for matters of
resummation of divergent series, the situation is \emph{polarized}.
Polarizability is built--in the general solution of the problem and any
sensible summation process will have to take care of this; thus, the
consideration of sectors of opening $\pi$ in the case of Gevrey--1 series is
relevant precisely because of the concomitant $e^{\frac{\alpha}{x}}$ occurence
of exponential factors, for solutions close to the singular point.

We are now ready to introduce the Borel--Laplace mechanism; it is convenient
to change from $x \thicksim 0$ to $z \thicksim \infty$.

\begin{definition}
    Let $f (z) = \sum_{n=0}^\infty\  a_n z^{- (n + 1)}$, we define its Borel
    transform by
    \[ \mathcal{B} (f) (\zeta) : = \sum_{n=0}^\infty\  \frac{a_n}{n!} \zeta^n =
    \hat{f}  (\zeta). \]
\end{definition}

The shift by one unit in the exponents simplifies the formulas and changing
the name of the variable is most useful. For the moment, we thus suppose in
the definition of the Borel transform that the constant term of $f (z)$ is
null; we shall in fact see later how to enhance this transformation to more
general formal series and in the applications we have in view it will always
be possible to deal with series with $a_0 = 0$.

Thus, for the formal series solution of Euler's equation, $\widetilde{f_0} 
(z) = \sum_{n=0}^\infty\  (- 1)^n n! z^{- (n + 1)}$, we have
\begin{equation*}
\mathcal{B} (\widetilde{f_0}) (\zeta)
= \sum_{n=0}^\infty\ (- 1)^n \zeta^n = \frac{1}{1 + \zeta}. 
\end{equation*}
On any direction $d_{\theta} = e^{i \theta} \mathbb{R}_{>}$ in the Borel
plane, except $\mathbb{R}_{<}$, $\widehat{f_0}  (\zeta)$ can be analytically
continued and it decays at $\infty$; thus we can consider its Laplace
transforms on any non--singular direction
\[ \mathcal{L}_{d_{\theta}} \hat{f}  (z)
:= \int_0^{e^{i \theta} \infty}
e^{- z \zeta} \widehat{f_0}  (\zeta) d \zeta.\]
For any $d_{\theta} \neq \mathbb{R}_{<}$ this integral yields an analytic
function in a half plane $\{ \Re (z e^{i \theta}) > \text{constant}
\}$ bisected by $e^{- i \theta}$, which translates in an analytic function in
a sector of opening $\pi$ for $x \thicksim 0$.

As both Borel and Laplace transforms are \emph{morphisms of differential
    algebras}, on the spaces we are considering, it automatically ensures that
the analytic functions obtained by this process are solutions of Euler's
equation: we have performed a \emph{resummation} of the divergent series
$f_0 (x)$ in sectors at the origin\cite{LR94}.

When we move the direction $d_\theta$ without crossing the singular direction,
the Laplace sums coincide on the overlapping sectors (by Cauchy's formula and
Lebesgue dominated convergence at $\infty$) but it is not so when we perform
the integration slightly above and slightly under the singular direction
$\mathbb{R}_{<}$: the 2 integrals differ -- this is \emph{Stokes
    phenomenon}.

In this case, the difference can be explicitely calculated because we have a
closed--form formula for $\widehat{f_0}  (\zeta)$, that has as single
singularity $\omega = - 1$, which is a simple pole with residue equal to $1$
and we get, for $\varepsilon > 0$, $\mathcal{L}_{d_{\pi - \varepsilon}}
\widehat{f_0}  (z) -\mathcal{L}_{d_{\pi + \varepsilon}}  \widehat{f_0}  (z) =
2 \pi i e^z = 2 \pi i e^{\frac{1}{x}}$.

For Euler's equation, everything is explicit but this simple case displays
features which will be quite general for solutions of ODEs with analytic data:
in generic cases, when expressed in a \textbf{suitable variable $z \sim
    \infty$,} these series $\widetilde{f} (z)$ have Borel transforms $\hat{f}$ 
    which
are convergent and a Laplace transform of these $\hat{f}$, when justified,
yields ``sectorial sums'' of the $\widetilde{f} (z)$.

Stokes phenomenon is precisely the fact that, on some overlapping sectors,
these sums may differ.

In practice, the functions $\hat{f} (\zeta)$ have analytic continuations with
\textbf{isolated singularities} $\omega$ and the very presence of
the singularities of $\hat{f}$ accounts for the divergent character of
$\widetilde{f}$; the analysis of these singularities is achieved by alien 
calculus
and Stokes phenomenon can eventually be expressed through the use of alien
derivations on the formal solutions.

\subsection{The Newton polygon}\label{sec:NP}

Newton polygons (NP for short) are pervasive for the study of singularities;
in the context of singularities of linear ordinary differential equations,
they were introduced by J.--P. Ramis in the seminal Ref.~\onlinecite{RAM78} as
a crucial tool for Gevrey asymptotics.

The NP of a linear ODE encodes many properties of the formal solutions of
this equation; we recall now the definition and main properties of the Newton
polygon for a linear differential operator with polynomial coefficients,
see Refs.~\onlinecite{LR17} and \onlinecite{VS03} for further information and
proofs of the classical results we shall use. As such equations arise naturally 
in the sequel, we consider, near $x=0$ linear ODEs such as
\begin{equation}\label{e:edolinr}
H(f)(x)=a_n (x) f^{(n)} (x) + \ldots + a_1 (x) f' (x) + a_0 (x) f (x) = 0
\end{equation}
where the coefficients functions $a_i (x)$ belong to $\mathbb{C} (x)$. Up to a  
polynomial factor, such equations do not differ from equation 
\eqref{e:edolin},  but in this case we will deal  with rational fractions $h\in 
\mathbb{C}(x)$ that can be written as a Laurent series 
\[
h(x)=\sum_{q\geqslant -N} h_{q}x^q.
\] 

\begin{definition}
    We consider such a linear differential equation $H (f) = 0$, where $ H\in
    \mathbb{C} (x) [d / d x]$ and we express such an operator $H$ of order $n$
    in the following way:% (possibly after premultiplication by a suitable 
    %power of $x$):
    \[ H = H_n (x) \theta^n + \ldots + H_1 (x) \theta + H_0 (x) \quad
    \text{where}\quad \theta = x \frac{d}{d x}. \]
    We consider the set $\mathcal{S}$ of pairs $(i, q)$, with $0 \leqslant i
    \leqslant n$ and $q \in \mathbb{Q}$ such that the monomial $x^q$ appears
    with a non--vanishing coefficient in the Laurent series of $H_i (x)$.
    
    The NP at $0$ of equation \eqref{e:edolinr} is the convex hull of the set 
    $\{ (u, v),
    (i, q) \in \mathcal{S}, 0 \leqslant u \leqslant i, q \leqslant v \}$ in
    $\mathbb{R}^+\times \mathbb{R}$. Its boundary is the union of 2 vertical 
    lines
    (corresponding to $i = 0$ and $i = n$) and a finite number of segments, with
    slopes $q_i$, with $0 \leqslant q_1 < q_2 \ldots < q_r$.
    
    In the same way, the NP at $\infty$ of equation \eqref{e:edolinr} is the 
    convex hull of
    the set $\{ (u, v), (i, q) \in \mathcal{S}, 0 \leqslant u \leqslant i, v
    \leqslant q \}$ in $\mathbb{R}^+\times \mathbb{R}$.
\end{definition}

It is readily checked\cite{LR17} that the NP at $\infty$ of
equation \eqref{e:edolinr} is the symmetric with respect to the horizontal axis 
of the NP
at $0$ of the equation which is obtained from \eqref{e:edolinr} by the change 
of variable
$x \longrightarrow 1 / x$.

In the case of Euler's equation in homogeneous form \eqref{e:eulerlin}, the 
corresponding operator is 
\begin{equation*}
x\theta^2+\theta-1
\end{equation*}
and its NP
has 3 points $(0, 0)$, $(1, 0)$ and $(2, 1)$ so that we get the NP depicted in
Fig.~\ref{fig:NP_Euler} and the slopes are thus $0$ and $1$.

\begin{figure}
    \begin{center}
        \includegraphics{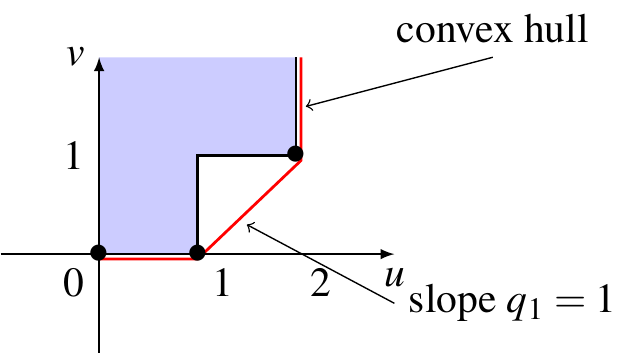}
    \end{center}
    \caption{\label{fig:NP_Euler}
        Newton Polygon at $0$ for Euler's equation \eqref{e:eulerlin}.
    }
\end{figure}

We shall notably use in the present article the so--called ``main theorem for
Gevrey asymptotics'', which can be stated in the following way:

\begin{theorem}
    Let $H \in \mathbb{C} (x) [d / d x]$. The equation $H (f) = 0$ will have a
    formal series solution $f$ if and only if its NP at $0$ has a slope which is
    horizontal. In that case, if there is no positive slope, then the formal 
    series solution is analytic; else, if we denote by $q_1 < \ldots < q_r$ the 
    positive
    slopes of the Newton polygon of the equation at the origin, the degrees of
    the determining factors appearing in a {basis of formal solutions} belong 
    to the set $\{
    q_1, \ldots, q_r \}$.
    
    Moreover, in any direction $d$, there is a sector $S$ at the origin,
    bisected by $d$, such that $H (f) = 0$ admits in $S$ a fundamental system of
    analytic solutions and the maximal growth rate of a solution approaching the
    origin is of type $|x|^{\beta} e^{\frac{\alpha}{|x|^{q_r}}}$ 
    ($\alpha \in \mathbb{R}^+$).\end{theorem}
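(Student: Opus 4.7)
The plan is to prove this theorem by reducing to a single-slope situation via a formal factorization of $H$ along the slopes of its Newton polygon, and then to describe the resummation of the formal factors by Borel--Laplace. First, I would invoke the formal decomposition theorem (Malgrange, Ramis--Sibuya, or equivalently Levelt--Turrittin after a suitable ramified change of variable $x = t^p$): given the slopes $0 \leqslant q_1 < \cdots < q_r$ of the NP of $H$ at $0$, one can factor $H$ in $\mathbb{C}((x))[d/dx]$ as $H = \widetilde{H}_r \cdots \widetilde{H}_1 \widetilde{H}_0$, where each $\widetilde{H}_j$ is a differential operator whose Newton polygon has only the single slope $q_j$, with $\widetilde{H}_0$ absent precisely when no horizontal slope occurs.

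Next, I would analyze each single-slope factor separately. For the slope-$0$ factor $\widetilde{H}_0$, classical Fuchs theory yields a basis of solutions of the form $x^{\beta}\sum_{i=0}^m \Phi_i(x)\log^i x$ with $\Phi_i \in \mathbb{C}\{x\}$ convergent and $\beta$ ranging over the roots of the indicial equation attached to the horizontal slope; in particular a genuine formal (indeed analytic) power series solution exists exactly when such a horizontal slope is present with a nonnegative integer indicial root. For a slope-$q_j > 0$ factor, I would perform the shear $f(x) = e^{q(x^{-q_j})} g(x)$ with $q(u) \in \mathbb{C}[u]$ of degree one, its leading coefficient being a root of the characteristic polynomial read off the edge of the NP of slope $q_j$; this transformation turns $\widetilde{H}_j$ into a slope-$0$ operator in $g$, whose Fuchs analysis gives a logarithmic--algebraic prefactor $x^{\beta}\sum \Phi_i(x)\log^i x$, with the $\Phi_i \in \mathbb{C}[[x]]$ now only formal. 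A direct majorant estimate on the recursion for the coefficients of $\Phi_i$ driven by the reduced operator yields the Gevrey--$1/q_j$ bound on these coefficients.

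Combining the factors gives a basis of formal solutions of the form $x^{\beta}e^{q(x^{-q_j})}\sum_{i=0}^m \Phi_i(x)\log^i x$, with determining-factor degrees in $\{q_1,\dots,q_r\}$ as claimed; in particular the pure formal power series part can only come from $\widetilde{H}_0$, proving the first equivalence, and the ``no positive slope'' case reduces to the Fuchs case where the series is convergent. For the sectorial analytic existence, I would apply Borel--Laplace summation level by level: for each slope $q_j$, the Gevrey--$1/q_j$ series $\Phi_i$ has a Borel transform of level $q_j$ which is convergent and admits analytic continuation away from finitely many singular directions in the Borel plane; Laplace along any non--singular direction $d$ produces a sectorial sum on a sector of opening $>\pi/q_j$, and iterating (multisummability) over the slopes $q_1<\cdots<q_r$ assembles these into analytic solutions on a common sector bisected by $d$. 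The maximal growth $|x|^{\beta} e^{\alpha/|x|^{q_r}}$ then follows, since the exponential factor attached to the largest slope $q_r$ dominates.

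The hard part is the formal factorization step, which is the Ramis--Sibuya--Levelt--Turrittin theorem: it requires careful combinatorial tracking of the Newton polygon under Euclidean left-division in $\mathbb{C}((x))[d/dx]$, together with the ramification needed to reach integer slopes and the verification that the characteristic polynomials at each slope separate the factors. Once that decomposition is in hand, the slope-by-slope shear reductions and the Borel--Laplace multisummability machinery are essentially routine, although the bookkeeping for logarithmic prefactors and the identification of the Gevrey index with the reciprocal of the slope require some care.
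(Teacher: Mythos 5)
The paper does not actually prove this theorem: it is stated explicitly as the classical ``main theorem for Gevrey asymptotics'' and the reader is referred to Loday-Richaud (Ref.~\onlinecite{LR17}) and van~der~Put--Singer (Ref.~\onlinecite{VS03}) for proofs. So there is no in-paper proof to compare against; what you have written is a condensed account of the standard proof found in those references, and the strategy you outline (formal factorization of $H$ in $\mathbb{C}((x))[d/dx]$ along the slopes of the Newton polygon, i.e.\ Levelt--Turrittin after ramification; slope-by-slope shear $f = e^{q(x^{-q_j})}g$ reducing to the Fuchsian case; Maillet-type majorant estimates giving the Gevrey index $1/q_j$; then Borel--Laplace / multisummability for the sectorial fundamental system) is indeed the architecture of the classical argument.

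A few places in your sketch are stated more confidently than the underlying facts warrant and would need tightening if this were to become a real proof. First, the ``if and only if'' in the first sentence of the theorem is delicate: the existence of a horizontal slope is necessary, but for a formal power series solution in $\mathbb{C}[[x]]$ (rather than merely $x^{\beta}\mathbb{C}[[x]]$ with $\beta$ an indicial root) one also needs a nonnegative integer indicial root on that horizontal edge, a point your reduction to the Fuchs case brushes past. Second, the Gevrey--$1/q_j$ estimate via ``a direct majorant estimate on the recursion'' is the Maillet--Ramis theorem; it is true but not routine, and it in fact again uses the Newton polygon structure. Third, the final sectorial statement requires the Hukuhara--Turrittin--Sibuya asymptotic existence theorem or the Ramis--Sibuya multisummability theorem for linear systems; your phrase ``iterating over the slopes assembles these into analytic solutions on a common sector'' is exactly the content of multisummability, which is a theorem rather than a bookkeeping exercise. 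None of these are errors in direction, but they are the genuinely hard steps, and the proposal currently relies on them without argument, much as the paper itself does by citation.
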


Whenever the Borel transform $\widehat{f}$ of a formal series satifies a linear 
ODE in the variable $\zeta$, that is $\widehat{H}(\widehat{f})=0$ (as this will 
be the case in our examples), this theorem is useful to derive the growth at 
infinity of this function. If the NP at $\zeta=\infty$ of $\widehat{H}$ has 
slopes $0\geqslant q'_1 >\dots >q'_r$ then, in any nonsingular direction, the 
growth of $\widehat{f}$ is of type $|\zeta|^{\gamma}e^{\alpha|\zeta|^{-q'_r}}$ 
($\gamma\in \mathbb{R}$, $\alpha\in \mathbb{R}^+$).

The general algorithm that computes a basis of formal solutions is very clearly
exposed in Ref.~\onlinecite{Della} and it turns out that, for the Euler
equation, as well as for the equations $(E_k)$, the algorithm simplifies
drastically.
For these equations we can luckily apply the following procedure.

Suppose the Newton polygon of the differential operator
\[ H = H_n (x) \theta^n + \ldots + H_1 (x) \theta + H_0 (x)\]
has one horizontal slope from $(0,d_0)$ to $(1,d_0)$ and then one positive 
slope $q$ from $(1,d_0)$ to $(n,d_1)$ ($q=\frac{d_1-d_0}{n-1}$) then
\begin{itemize}
    \item[Step 1] The equation $H(f)$ has exactly one-dimensional family of
    solutions generated by some $y_0$ in $x^{\beta}\mathbb{C}[[x]]$ and the
    exponent $\beta$ is determined by the ``indicial equation''
    \[H_{1,d_0}\beta + H_{0,d_0}=0,\]
    where $H_{1,d_0}$ (resp. $H_{0,d_0}$) is the coefficient of degree $d_0$
    in $H_1$ (resp. $H_0$). More generally, the indicial equation
    $Q(\beta)=0$ where the polynomial $Q(\beta)$ corresponds to the
    coefficient of the monomial of lowest degree in $x$ in
    $x^{-\beta}H x^{\beta}$. Note that the $H x^{\beta}$ is the operator that 
    associates to any test function $f$ the function $H(x^{\beta}f)$ and not 
    the action of $H$ on the function $x^{\beta}$ that would be noted 
    $H(x^{\beta})$. 
    \item[Step 2] For $u\in \mathbb{C}$ one compute the operator
    \begin{align*}
    H_u
    & = e^{-\frac{u}{x^q}}H e^{\frac{u}{x^q}}\\
    & = H_n (x) (\theta-qux^{-q})^n + \ldots 
    + H_1 (x) (\theta-qux^{-q}) + H_0 (x)\\
    & = H_n (x,u) \theta^n + \ldots + H_1 (x,u) \theta + H_0 (x,u).
    \end{align*}
    The term of lowest degree in $x$ in $H_0 (x,u)$ is a polynomial $P(u)$
    of degree $n$, with 0 as a root.
    \item[Step 3] If this polynomial has exactly $n-1$ non zero distinct roots
    $u_1,\dots,u_{n-1}$, for each $u_i$, the newton polygon of $H_{u_i}$
    has the same shape as for $H$ and one can apply Step 1 to get a formal
    solution $y_i$ of $H_{u_i}$.
\end{itemize}
Once this procedure is finished, we obtain a basis of formal solutions:
\[ y_0(x),\ e^{\frac{u_1}{x^q}}y_1(x),\
\dots,\ e^{\frac{u_{n-1}}{x^q}}y_{n-1}(x).  \]
Note that the situation is less simple whenever the polynomial $P(u)$ has
multiple roots\cite{Della}.

\subsection{An exercise: the homogeneous Euler equation}\label{euler}

For $H=x\theta^2+\theta-1$ (see (\ref{e:eulerlin})), Step 1 in the algorithm 
ensures that we have a nontrivial solution $\widetilde{f_0}\in 
x^{\beta}\mathbb{C}[[x]]$ where the indicial equation is in this case 
$\beta-1=0$, thus $\beta=1$ and $\widetilde{f_0}\in x\mathbb{C}[[x]]$.
Step 2, with slope $q=1$, gives
\[
H_u = e^{-\frac{u}{x}}H e^{\frac{u}{x}}
=x\theta^2+(1-2u)\theta +[(u-1)+(u^2-u)x^{-1}]
\]
and $P(u)=u^2-u$ so that we can move to Step 3 with $u=1$ and 
$H_1=x\theta^2-\theta$. The algorithm ensures that there exists a formal 
solution to $H_1(f)=0$, here $y_1(x)=1$ for instance. We have a basis of 
solution $\widetilde{f_0},e^{\frac{1}{x}}$.

As we have already noticed, Newton's polygons are also informative on
the Borel transform of a solution and its growth at infinity. Toward a process
described in section \ref{sec:holoborel} (see also Ref.~\onlinecite{VDH07}), the
Borel transform  of $\widetilde{f_0}$ is itself  a solution of an equation 
$\widehat{H}(\widehat{f}_0)=0$ where \[
\widehat{H}=(1+\zeta)\theta +\zeta \qquad \text{with} \qquad \theta = \zeta 
\frac{d}{d \zeta} \]
and without any computation of the solutions, the Newton's polygon of this 
equation gives once again many informations:
\begin{enumerate}
    \item The NP at $\zeta=0$ has no positive slope: this equation has analytic 
    solutions at $\zeta=0$ (the indicial equation gives $\beta=0$) and the 
    Cauchy-Lipschitz theorem ensure that the solutions can be continuated along 
    any path that avoids $\zeta=-1$.
    \item The NP at $\zeta=\infty$ has no negative slope, that is to say that 
    its symmetric with respect to the horizontal axis, that corresponds to the 
    Newton's polygon at $0$ in the variable $\xi=1/\zeta$ has only a zero 
    slope: in the variable $\xi$, solutions near $\xi=0$ are in 
    $\xi^{\beta}\mathbb{C} \lbrace \xi \rbrace$ for some $\beta\in \mathbb{R}$ 
    and thus the solutions of $\widehat{H}(\widehat{f}_0)=0$ are $O(|\zeta 
    |^{-\beta})$ near $\zeta=\infty$. This ensures that the Laplace Transform 
    exists in any direction avoiding $\zeta=-1$.
\end{enumerate}

This interplay between the Borel Transform and Newton's polygon is very useful 
to understand, with few computations, the resurgence of solutions of ODEs. 
 
\section{Resurgent functions}\label{s:res}

\subsection{Algebras of resurgent functions and resurgence}

We recall now the main features of resurgent functions theory, \emph{first in an
    informal way}; special cases shall then be introduced next, with the precise
definitions for the restricted spaces of resurgent functions which are relevant
in the present article.

At its core, resurgence involves analytic functions of one complex variable
$\zeta$ which have isolated singular points and families of operators which
``measure'' these isolated singularities. There are spaces of resurgent
functions of various levels of complexity, the definitions of which shall
depend of the complexity of the problems they are involved in. In the present
article we shall only need to consider resurgent functions of a very
elementary type, introduced below.

We first consider the linear space $\mathcal{E}$ of germs $\varphi (\zeta)$ at
the origin of $\mathbb{C}$ which can be continued along any broken line
$\Gamma$ starting at $0$, possibly circumventing a finite number of isolated
singular points $\omega_i$ met on $\Gamma$. This definition presupposes that
the continuation of $\varphi$ close to such a point $\omega$ is defined for
paths going around it, following a small half--circle on the right or on the
left of the direction $d$ and, in general, this process will entail
multivaluedness.

With the convolution product $\varphi \ast \psi (\zeta) := \int_0^{\zeta}
\varphi (s) \psi (\zeta - s) d s$ (for $\zeta$ close to $0$ ), $\mathcal{E}$
is in fact an algebra (a highly non trivial fact \cite{E1,E3,sauzin}); 
multiplication by $-
\zeta$ is a derivation with respect to the convolution product and
$\mathcal{E}$ has a structure of differential algebra, with these operations.

We shall however very soon need to consider ``germs that are in fact also
singular at the origin'', namely functions defined in the lift of a pointed disk
at the origin of $\mathbb{C}$ on the Riemann surface of the logarithm
$\mathbb{C}_{\bullet}$.

By germ at the origin of $\mathbb{C}_{\bullet}$, we shall mean in this article 
a function
defined on the lift on $\mathbb{C}_{\bullet}$ of a pointed disk $D^{\ast} (0,
r)$ on $\mathbb{C}$. A regular germ $\varphi (\dot{\zeta})$
can be seen as the (class of the) singular germ $\check{\varphi} (\zeta) = 
\frac{\log
    (\zeta)}{2 i \pi} \varphi (\dot{\zeta})$

\begin{definition}
    Let $\mathcal{F}$ be the set of classes
    ${\varphi}^\nabla (\zeta)$ (mod $\mathbb{C} \{ \zeta \}$) of
    germs $\check{\varphi} (\zeta)$ on $\mathbb{C}_{\bullet}$ \ such that the 
    germ 
    $\hat{\varphi} (\zeta)$ has isolated singularities, where
    \[ \hat{\varphi} (\zeta) = \check{\varphi} (\zeta) - \check{\varphi} (e^{- 2
        \pi i} \zeta) \]
    $\check{\varphi} (\zeta)$ is called \textbf{a major} of $\varphi$ and
    $\hat{\varphi} (\zeta)$ is called \textbf{the minor} of $\varphi$.
    Elements of $\mathcal{F}$ will be called resurgent functions and will 
    simply be denoted
    by $\varphi $ when the abuse of language is innocuous.
\end{definition}

The set $\mathcal{F}$ carries a natural structure of vector space but in
fact it is an algebra, with a suitable definition of
convolution\cite{E3,sauzin}.

Elements $\varphi$ of $\mathcal{F}$ for which a major $\check{\varphi}$
satisfies $\zeta \check{\varphi} (\zeta) \longrightarrow 0$ when $\zeta
\longrightarrow 0$ uniformly in sectors of bounded opening and such that
$\hat{\varphi}$ is integrable at 0 are called \emph{integrable resurgent
    functions} and their set will be denoted by $\mathcal{F}_0$. These 
    resurgent functions are characterized by their minors and the convolution
alluded to above boils down to the convolution of minors:
we can identify such a function $\varphi$ with its minor $\hat{\varphi}$ and 
accordingly we shall have,
for any $\varphi$, $\psi$ in $\mathcal{F}$:
\[ {\varphi \ast \psi} (\zeta) := \int_0^{\zeta}
\hat{\varphi} (s) \hat{\psi} (\zeta - s) d s, \quad {(\zeta \sim 0 )}.
\]

In the applications considered by the present paper, it will always be
possible to work with integrable resurgent functions, possibly after some
simple transformation (essentially, for formal series, premultiplication of
$\varphi (z)$ by some suitable power $z^n$   ($n \in \mathbb{N}$), which
corresponds to performing a finite number of integrations in the Borel plane.

Generally, (see e.g. below or Refs.~\onlinecite{E3,sauzin}), resurgent 
functions which
appear as Borel transforms of divergent series which are solution to some 
complex
analytic dynamical system at a singular point will display exponential growth
in directions of the Borel plane which don't contain singularities. As such,
they will be amenable to the Borel--Laplace summation mechanism and we thus
have, in Ecalle's terminology three models for a resurgent function $\varphi$:

\begin{enumerate}
    \item space of formal series $\widetilde{\varphi (z)}$;
    \item space of classes of singular germs in the Borel plane
    ${\varphi^{\nabla} (\zeta)}$;
    \item space(s) of analytic functions $\varphi (z)$ in sector(s).
\end{enumerate}

We have given above the definition of Borel transform for formal series
without constant terms first, by:
\[ \mathcal{B} \left(\sum_{n=0}^\infty\  a_n z^{- n - 1} \right) =
\sum_{n=0}^\infty\  \frac{a_n}{\Gamma (n + 1)} \zeta^n \]
and then extended it to any formal series by defining $\mathcal{B} (1) =
\delta$.

As we have seen, the Major/Minor formalism very naturally incorporates the
convolution unit by expressing $\delta$ as the class determined by the major
$\frac{1}{2 \pi i \zeta}$, without making it necessary to consider spaces of
distributions -- and this is particularly valuable for questions of
convergence of sequences of resurgent functions.

Now, for the applications, it is necessary to enhance the definitions of Borel
and Laplace transforms to more general classes of formal objects: indeed,
even for the case of linear differential equations considered in the present 
work, we meet in the formal
solutions expansions involving ramified (Puiseux) formal series, together with
integer powers of logarithms.

A first extension consists thus in considering power functions, of any
exponent; we have to define the corresponding classes $\varphi^{\nabla}$ of
$\widetilde{\varphi} = z^u$ in such a way as to respect the properties of
morphisms of differential algebras, which are absolutely crucial. In the
following definition, it is necessary to make a distinction between integer
and non integer exponents:

\begin{definition}
    \begin{enumerate}
        \item For $\sigma \not\in \mathbb{Z}$, we define $\mathcal{B} (z^{- 
        \sigma})
        = \varphi^{\nabla}$ with, as pair (natural major, minor):
        \[ \left( \check{\varphi} = (1 - e^{- 2 \pi i \sigma}) 
        \frac{\zeta^{\sigma
                - 1}}{\Gamma (\sigma)}, \hat{\varphi} = \frac{\zeta^{\sigma -
                1}}{\Gamma (\sigma)} \right). \]
        \item For $n \in \mathbb{N}$, we define $\mathcal{B} (z^n) =
        \varphi^{\nabla}$ with, as pair (natural major, minor):
        \[ \left( \check{\varphi} = \frac{(- 1)^n}{2 \pi i} \zeta^{- n - 1} 
        \Gamma
        (n + 1), \hat{\varphi} = 0 \right). \]
    \end{enumerate}
\end{definition}

For any $\sigma \in \mathbb{C}$, with \ $\Re\sigma > 0$, the Borel
transform of $z^{- \sigma}$ is characterized by its minor (it is an integrable
singularity) and we can safely denote in this case $\mathcal{B} (z^{- \sigma})
= \frac{\zeta^{\sigma - 1}}{\Gamma (\sigma)}$, using the same abuse of
notations as for formal series with integer exponents. This can be extended
formally to:
\[ \mathcal{B} \left( \sum_{\sigma \in \mathbb{Q}^{>}} c_{_{\sigma}} z^{-
    \sigma} \right) = \sum_{\sigma \in \mathbb{Q}^{>}} c_{\sigma} 
\frac{\zeta^{\sigma - 1}}{\Gamma (\sigma)}. \]
In expansions using Puiseux series, if we have geometrical growth estimates
($\left|c_{\sigma}/\Gamma (\sigma)\right| \leqslant A B^{\sigma} ; A, B >
0$, for exponents $\sigma$ which are multiples of a given rational number, 
say), then the right hand side defines a germ (on $\mathbb{C}_{\infty}$) and
the Borel transform indeed extends as a morphism of differential algebras from
the space of Puiseux formal series in $z^{- 1}$ with the ordinary product and
derivation $\partial_z = \frac{d}{d z}$ to the algebra of local integrable
resurgent, with convolution of minors and as derivation the multiplication (of
minors) by $- \zeta$.

More generally, for any $\sigma \in \mathbb{C}$, with \ $\Re \sigma > 0$
and $r \in \mathbb{N}$, the Borel transform of $z^{- \sigma} (\log z)^r$ is
the integrable resurgent function characterized by its minor, which is:
\[ \mathcal{B} (z^{- \sigma} (\log z)^r) = \zeta^{\sigma - 1} \sum_{i = 0}^r
\binom{r}{i} \left( \frac{1}{\Gamma (\zeta)} \right)^{(r)} (\log
\zeta)^{r - i}. \]

\subsection{Alien operators}

Let us consider a resurgent function $\varphi^{\nabla}$, as in the previous 
subsection,
which is characterized by its minor $\hat{\varphi}$; we will moreover suppose
first that $\hat{\varphi}$ is regular at the origin of $\mathbb{C}$ (this will
be the case for the applications to the equations $E_k$ below), and denote by
$\Omega$ its set of singularities in $\mathbb{C}$. We shall also suppose that
is a fixed discrete set of $\mathbb{C}$, which will be enough to introduce
all the necessary concepts; in the applications below, $\Omega$ will in fact
be finite.

For any $\omega$ in $\mathbb{C}$ and any path $\gamma$ from the origin to
$\omega$ there is an operator $\Delta_{\omega}^{\gamma}$ that measures the
singularity at $\omega$ of the analytic continuation along $\gamma$ of any
element of $\mathcal{E}$:
\[ \Delta_{\omega}^{\gamma} \varphi (\zeta) := {\varphi_{\omega}
    (\omega + \zeta)} \mod \mathbb{C} \{ \zeta \}. \]
In this formula,
$\varphi_{\omega}$ is the analytic continuation of $\varphi$ along $\gamma$,
which is defined on the lift on $\mathcal{S}$ of a pointed disk $D (\omega,
r)$.

It is important to observe that this definition involves 2 operations:
``extraction of singularity'' at $\omega$ and translation to the origin. The
operator $\Delta_{\omega}^{\gamma}$ associates to any given regular germ at 0
(element of $\mathcal{E}$) a class of singular germs. The
$\Delta_{\omega}^{\gamma}$ are linear and we will obtain endomorphisms if we
enhance the following definition to the vector space $\mathcal{F}$: if
$\varphi^{\nabla}$ is an element of $\mathcal{F}$,
$\Delta_{\omega}^{\gamma} \varphi$ is given by the same formula above, in
which $\varphi_{\omega}$ is by definition now the analytic continuation of
\emph{the minor} of $\varphi$ along $\gamma$.

Averages of some of these $\Delta_{\omega}^{\gamma}$, with coefficients which 
satisfy relevant symmetry properties\cite{E1, E3, sauzin} will next yield
the alien derivations (there are several families of them; we shall only need 
the standard one). When $\omega$ is
the first singularity met in the continuation in a given direction $d$
(``polarization''), we define: \ $\Delta_{\omega} \varphi (\zeta) : =
\Delta_{\omega}^{\gamma} \varphi (\zeta)$, where $\gamma$ is the segment on
$d$ from the origin to $\omega$.

When there are several singularities $\omega_1, \ldots, \omega_{r - 1}$ on $d$
before reaching $\omega_r = \omega$, we define
\[ \Delta^+_{\omega} \varphi (\zeta) : = \Delta_{\omega}^{\gamma} \varphi
(\zeta), \]
where $\gamma$ is a path from the origin to $\omega$ closely following $d$
from the right.

Finally, there are families of numbers $A^{(\omega_1, \ldots, \omega_r)}$
such that the operators defined by
\[ \Delta_{\omega} \varphi (\zeta) : = \sum_{\| (\omega_1, \ldots, \omega_r)
    \| = \omega} A^{(\omega_1, \ldots, \omega_r)} \Delta^+_{\omega_r} \ldots
\Delta^+_{\omega_1} \quad \text{where}\  \| (\omega_1, \ldots, \omega_r) \|
= \omega_1 + \ldots + \omega_r \]
are indeed derivations of the algebra $\mathcal{F}$. The symmetry property for
the $A^{(\omega_1, \ldots, \omega_r)}$ which will ensure that the associated
$\Delta_{\omega}$ will satisfy Leibniz rule is called alternelity but we won't
need to go into these considerations in the present text because we shall only
consider the simplest of these families which is given by
$A^{(\omega_1, \ldots, \omega_r)} =
\frac{(- 1)^{r - 1}}{r}$, which yields the so--called standard alien
derivations.

When the resurgent functions have corresponding expressions $\widetilde{\varphi}
(z)$ ``in the formal model'' (it is e.g. the case for majors of the type
$\zeta^{\sigma} (\log \zeta)^k$, as we have seen above, then \ pullbacks by
$\mathcal{B}^{- 1}$ then gives alien operators acting on the formal objects
$\widetilde{\varphi} (z)$, for which we keep the same notation
$\Delta_{\omega}$.

Finally, we introduce the exponential--carrying alien operators
$\pmb{\Delta}_{\omega}$ (denoted in bold, or also as pointed alien
derivations in earlier texts) by the following expression in the formal model
\[ \pmb{\Delta}_{\omega}  \widetilde{\varphi} (z) := e^{- \omega z}
\Delta_{\omega}  \widetilde{\varphi}  (z). \]

These $\pmb{\Delta}_{\omega}$ act on formal expressions involving not
only resurgent functions but also exponentials $e^{- \omega z}$; there is a
thorough theory of these so--called \emph{transseries} but in the present
work we shall only cope with finite sums of terms
$e^{- \omega z}\widetilde{\psi}(z)$, typically involving a basis of the finite
dimensional vector spaces of the solutions of the linear ODEs we are dealing
with. The
$\pmb{\Delta}_{\omega}$ commute with $\partial = \frac{\partial}{\partial
    z}$, which makes them particularly convenient in formal calculations, as we
shall see in practice below.

\subsection{Resurgence for holonomic functions with one critical time}
\label{sec:holoborel}

\begin{definition}
    A function of one complex variable $x$ or a (Puiseux) formal series $f (x)$
    is called holonomic if it is solution to a linear ODE
    \[ H (f) = 0 \quad \text{where} \quad H \in \mathbb{C} [x] \left[
    \frac{d}{d x} \right]. \]
\end{definition}

An equivalent definition would be to require the existence of $H \in
\mathbb{C} (x) \frac{d}{d x}$ (which can then chosen to be monic) such that
$H (f) = 0$ and a convenient characterization of a holonomic function is that
$\mathbb{C} \left[ \frac{d}{d x} \right] (f)$ be of finite type over
$\mathbb{C} (x)$.

A function which is a solution of an equation ``of affine type'', say $H (f) =
g$ (with $H \in \mathbb{C} (x) \left[ \frac{d}{d x} \right]$ and $g \in
\mathbb{C} (x)$ ) is holonomic, as seen by dividing the equation by $g$ and
applying $\frac{d}{d x}$. Thus, solutions of Euler's equation above are
holonomic.

We recall the very well known following stability properties, for holonomic
functions:

\begin{proposition}
    The sum and the product of two holonomic functions are holonomic. The
    postcomposition of a holonomic function by an algebraic function is
    holonomic
\end{proposition}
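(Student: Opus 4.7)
The plan is to use the characterization stated just before the proposition: a function (or Puiseux formal series) $f$ is holonomic iff the $\mathbb{C}(x)$-module $V_f := \mathbb{C}(x)[d/dx](f) = \sum_{n\geq 0} \mathbb{C}(x)\, f^{(n)}$ has finite dimension over $\mathbb{C}(x)$, since any nontrivial linear relation among finitely many derivatives yields a linear ODE annihilating $f$. Each of the three stability properties then reduces to exhibiting a finite-dimensional $\mathbb{C}(x)$-module containing $V_{f+g}$, $V_{fg}$, or $V_{f\circ a}$.

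For the sum, linearity of differentiation gives $V_{f+g} \subseteq V_f + V_g$, so $\dim_{\mathbb{C}(x)} V_{f+g} \leq \dim V_f + \dim V_g < \infty$. For the product, Leibniz's rule $(fg)^{(n)} = \sum_{k=0}^n \binom{n}{k} f^{(k)} g^{(n-k)}$ gives $V_{fg} \subseteq \sum_{i,j} \mathbb{C}(x)\cdot f^{(i)} g^{(j)}$; if $(e_i)_{1\leq i\leq p}$ and $(e'_j)_{1\leq j\leq q}$ are $\mathbb{C}(x)$-bases of $V_f$ and $V_g$, the finite family $(e_i e'_j)$ spans the right-hand side, so $\dim V_{fg} \leq pq$.

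For the postcomposition, let $a$ be algebraic over $\mathbb{C}(x)$ and set $K := \mathbb{C}(x)(a)$, a finite extension. The key preliminary is that $d/dx$ extends uniquely to $K$, by implicit differentiation of the minimal polynomial of $a$ (the extension is separable since we are in characteristic zero), and in particular every iterated derivative $a^{(n)}$ lies in $K$. Writing $g = f\circ a$, an easy induction on the chain rule (or Faà di Bruno's formula) expresses $g^{(n)}$ as a $K$-linear combination of $(f^{(k)}\circ a)_{0\leq k\leq n}$, whence $V_g \subseteq W := \sum_{k\geq 0} K\cdot (f^{(k)}\circ a)$. Composition by $a$ sends a $\mathbb{C}(y)$-basis $(f^{(k)})_{0\leq k<p}$ of $V_f$ to a $K$-generating family of $W$, so $\dim_K W \leq p$, and since $[K:\mathbb{C}(x)]<\infty$ we conclude $\dim_{\mathbb{C}(x)} V_g \leq p\cdot [K:\mathbb{C}(x)] < \infty$, yielding the holonomy of $g$. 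The main obstacle is precisely this case: one must track two nested layers of finite-dimensionality and, in particular, justify that $d/dx$ preserves the fixed extension $K$ rather than producing a new algebraic extension at each differentiation.
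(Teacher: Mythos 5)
Your proof is correct and is exactly the standard argument the paper gestures at: the paper itself gives no proof, merely noting that these stability facts ``rely on the characterization of holonomy mentioned above'' (finiteness of $\dim_{\mathbb{C}(x)}\mathbb{C}(x)[d/dx](f)$) and citing Ref.~\onlinecite{VDH07}, which is precisely the characterization you exploit via linearity, Leibniz, and Fa\`a di Bruno together with the finite extension $K=\mathbb{C}(x)(a)$. Your observation that $d/dx$ stabilizes $K$ by implicit differentiation is the right way to close the only real gap in the composition case.
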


These properties are in fact true for holonomic functions of several variables
and rely on the characterization of holonomy mentioned above (for algorithmic
aspects see e.g. Ref.~\onlinecite{VDH07}).
In particular, if $f (x)$ is holonomic, then, for any rational number $r$,
$g(x) := f (x^r)$ is holonomic and this simple
result will be crucial for questions of resurgence in our context.

\begin{proposition}
    The Borel transform of a holonomic function $f \in x\mathbb{C}
    [[x^{\mathbb{Q}^+}]]$ with vanishing constant coefficient is holonomic. As
    a consequence it is resurgent, with a finite number of singularities, by
    applying to the Borel plane the global theory of linear ODEs.
\end{proposition}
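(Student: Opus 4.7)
The plan is to translate the linear ODE satisfied by $f$ into one satisfied by its Borel transform, by means of the standard commutation rules of $\mathcal{B}$. Since $f$ is holonomic, it is annihilated by some $H \in \mathbb{C}(x)[d/dx]$. First I would change variable to $z = 1/x$, so that $d/dx = -z^{2}\,d/dz$ and polynomials in $x$ become Laurent polynomials in $z$; after left-multiplication by a suitable power of $z$ to clear denominators, this produces an operator $\widetilde{H} \in \mathbb{C}[z][d/dz]$ in the Weyl algebra, still annihilating $f$, now regarded as a Puiseux series in $z^{-1}$ with exponents in $\mathbb{Q}^{+}$.

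The second step is to apply the Borel transform to the equation $\widetilde{H}(f)=0$. The crucial input is the pair of commutation rules
\[
\mathcal{B}\bigl((d/dz)\, g\bigr) = -\zeta\cdot \mathcal{B}(g),\qquad
\mathcal{B}(z\,g) = (d/d\zeta)\,\mathcal{B}(g),
\]
which one verifies from the definitions of section \ref{s:res} on the monomials $z^{-\sigma}$ ($\sigma\in\mathbb{Q}^{+}$) and extends by linearity to all Puiseux series with vanishing constant coefficient, any residual $\delta$-type contributions being harmless in the major/minor formalism. These rules realize an algebra morphism $\mathbb{C}[z][d/dz]\to \mathbb{C}[\zeta][d/d\zeta]$ sending $z\mapsto d/d\zeta$ and $d/dz\mapsto -\zeta$, so the image $\widehat{H}\in\mathbb{C}[\zeta][d/d\zeta]$ annihilates $\widehat{f}$. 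Hence $\widehat{f}$ is holonomic in $\zeta$.

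For the resurgence conclusion, I would invoke the global theory of linear ODEs with polynomial coefficients recalled in section \ref{s:sing}: the singular set of $\widehat{H}(\widehat{f})=0$ coincides with the zero set of the leading polynomial coefficient of $\widehat{H}$, hence is finite; Cauchy--Lipschitz then yields analytic continuation of $\widehat{f}$ along every broken line avoiding these finitely many points, placing $\widehat{f}$ in the resurgent class $\mathcal{F}$ with at most finitely many singularities, as claimed.

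The main obstacle I anticipate is the bookkeeping around Puiseux exponents and the interaction of $\mathcal{B}$ with constant-term contributions when $z$ multiplies a term of low order such as $z^{-1}$: one must either ensure that no $\delta$-correction actually appears (which can be arranged by premultiplying $f$ by a suitable integer power of $x$, an operation preserving holonomy by the stability properties of the previous proposition), or check that the derivation of $\widehat{H}(\widehat{f})=0$ still goes through inside the enlarged algebra $\mathcal{F}$. The algebraic core—that $\mathcal{B}$ is a morphism of Weyl algebras under the above correspondence—is routine, and the finiteness of singularities is an immediate consequence of the polynomial nature of the coefficients of $\widehat{H}$.
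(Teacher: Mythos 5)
Your proposal is correct and follows essentially the same route as the paper: express the annihilating operator in terms of the pair of generators that behave well under $\mathcal{B}$ (the paper uses $D = x^2\,d/dx$ and multiplication by $x^{-1}$, you use $d/dz$ and multiplication by $z$ after setting $z=1/x$ — these are the same operators up to sign), transport the equation to the Borel plane via the resulting Weyl-algebra morphism to get $\widehat{H}\in\mathbb{C}[\zeta][d/d\zeta]$ annihilating $\widehat{f}$, take care of low-valuation/$\delta$ contributions by premultiplying by a power of $x$, and then conclude from the global theory of linear ODEs with polynomial coefficients that $\widehat{f}$ continues along any path avoiding the finite zero set of the leading coefficient.
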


\begin{proof}
    If $H (f) (x) = a_n (x) f^{(n)} (x) + \ldots + a_1 f' (x) + a_0 (x) f (x) =
    0$, we can rewrite this equation using the derivation $D$ defined by $D (f)
    (x) = x^2 f' (x)$, possibly after having multiplied it by a suitable integer
    power of $x$ to get an equation $K (f) = 0$, and then use the fact that the
    Borel transform of $D (f) (x)$ is $\zeta \hat{f} (\zeta)$ and that for any
    series $f \in x\mathbb{C} [[x^{\mathbb{Q}^+}]]$ of valuation $> 1$,
    $\mathcal{B} (x^{- 1} f) (\zeta)$=$\hat{f}' (\zeta)$.
    
    As a consequence, it is resurgent with a finite number of singularities, by
    applying to the Borel plane the global theory of linear ODEs (for these
    matters, see also Ref.~\onlinecite{ANDRE00}).
\end{proof}

Thus, from equation $(E_2)$ expressed in the variable $x$, we obtain
successively
\begin{equation}\label{eq:fhat}
\begin{aligned}
16 x^2 f'' (x) + 32 x^3 f' (x) + 3 f (x) + f' (x) & =  0\\
16 D^2 (f) (x) + 3 x^2 f (x) + D (f) (x) & =  0\\
x^{- 2} (16 D^2 (f) (x) + D (f) (x)) + 3 f (x) & =  0\\
((16 \zeta^2 + \zeta) \hat{f}  (\zeta))'' + 3 \hat{f}  (\zeta) & =  0\\
(16 \zeta^2 + \zeta) \hat{f}'' (\zeta) + 2 (32 \zeta + 1) \hat{f}'
(\zeta) + 35 \hat{f}  (\zeta) & =  0
\end{aligned}
\end{equation}
The last equation is singular at $0$ and $- \frac{1}{16}$ and this entails
that $\hat{f}$ can only have a singularity at $-
\frac{1}{16}$.

\begin{theorem}
    Let $f$ be a formal series which is solution to $H (f) = 0$, where $H \in
    \mathbb{C} [x] (d / d x)$ has a Newton polygon at $0$ with a single non
    zero slope, equal to $q$.
    
    Then $f$ is resurgent with respect to the critical time $z = 1 / x^q$, with
    a finite number of singularities in the Borel plane and exponential growth
    of order 1 in any non--singular direction
\end{theorem}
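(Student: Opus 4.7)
The plan is to reduce to the setting of the two preceding propositions by changing the independent variable to the critical time $z = x^{-q}$. Define $\widetilde{g}(z) := f(z^{-1/q})$; by the stability of holonomy under post-composition with an algebraic function, $\widetilde{g}$ is a (Puiseux) holonomic formal series in $z^{-1}$, solution of a linear ODE $K(\widetilde{g})=0$ with $K\in\mathbb{C}(z)[d/dz]$. Under the substitution $x = z^{-1/q}$ the Euler operator $\theta_x = x\,d/dx$ becomes $-q\,\theta_z$ and a monomial $x^a$ becomes $z^{-a/q}$; accordingly the Newton polygon of $K$ at $z=\infty$ is obtained from that of $H$ at $x=0$ by reflection across the horizontal axis and rescaling of the slopes by the factor $1/q$. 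In particular, the single positive slope $q$ of $H$ at $0$ becomes the single non-zero slope $-1$ of $K$ at $\infty$.

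By the main theorem for Gevrey asymptotics recalled in section \ref{sec:NP}, this forces $\widetilde{g}$ to be Gevrey-$1$ in $z^{-1}$, so that its Borel transform $\hat{g}(\zeta) := \mathcal{B}(\widetilde{g})(\zeta)$ is an analytic germ at $\zeta = 0$. Applying the proposition ``Borel of holonomic is holonomic'' from section \ref{sec:holoborel}, possibly after premultiplication by a suitable power of $z^{-1}$ to ensure the vanishing constant term hypothesis, we obtain a linear ODE $\hat{K}(\hat{g})=0$ with $\hat{K}\in\mathbb{C}[\zeta][d/d\zeta]$. The analytic continuation of $\hat{g}$ is then defined along any path avoiding the zeros of the leading coefficient of $\hat{K}$, so the singular set $\Omega$ of $\hat{g}$ in the Borel plane is finite.

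For the exponential growth at infinity in non-singular directions, the key step is to inspect the Newton polygon of $\hat{K}$ at $\zeta=\infty$: under the Borel dictionary, which exchanges $d/dz$ with multiplication by $-\zeta$ (as illustrated by the Euler example of section \ref{euler} and by the derivation \eqref{eq:fhat} of the Borel-transformed equation for $(E_2)$), the single non-zero slope $-1$ of $K$ at $z=\infty$ translates into a maximal non-zero slope equal to $-1$ for $\hat{K}$ at $\zeta=\infty$. The companion statement to the main Gevrey theorem, stated at the end of section \ref{sec:NP}, then yields in any non-singular direction a bound of the form $|\hat{g}(\zeta)|\leqslant C|\zeta|^{\gamma}e^{\alpha|\zeta|}$, i.e.\ exponential growth of order $1$. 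The main obstacle is precisely this careful bookkeeping of Newton polygons through both the algebraic change of variable $x\mapsto z^{-1/q}$ and the Borel transformation; once this identification of the maximal slope at $\zeta=\infty$ is in place, the three assertions of the theorem—analyticity of $\hat{g}$ near the origin, finiteness of the singular set, and exponential growth of order $1$ in non-singular directions—follow directly from the results of sections \ref{sec:NP} and \ref{sec:holoborel}.
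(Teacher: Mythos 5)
Your proposal is correct and follows essentially the same route as the paper's proof: ramify to make the single Newton slope equal to $1$, invoke stability of holonomy under algebraic substitution and the ``Borel of holonomic is holonomic'' proposition (after premultiplication by a power of the variable to clear the constant term), and read off finiteness of singularities and the order-one exponential growth from the Newton polygon of the Borel-transformed operator at $\infty$. The only difference is cosmetic: you phrase everything directly in the critical-time variable $z=x^{-q}$, whereas the paper first sets $g(x)=f(x^{1/q})$ and then implicitly passes to $z=1/x$.
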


\begin{proof}
    By premultiplying $f$ by a suitable power of $x$ with a positive exponent,
    we can suppose that $\operatorname{val} (f) > q$; this operation doesn't
    change the holonomic character, nor the slopes of the Newton polygon.
    
    Let $g (x) := f \left( x^{\frac{1}{q}} \right)$. Then $g$ is holonomic
    and the Newton polygon at $0$ of the differential equation $U (g) = 0$, with
    $U \in \mathbb{C} [x] (d / d x)$ \ obtained from $H (f) = 0$ by the action
    of ramification $x \longrightarrow x^{\frac{1}{q}} $ has a single slope,
    equal to one.
    
    The series $g$ is Gevrey--1 and its Borel transform is holonomic, by the
    previous proposition ($\operatorname{val} (g) > 1$). Moreover, the NP of the
    Borel
    transform of $U$ has a single slope equal to one at $\infty$
    (see Ref.~\onlinecite{LR17}),
    which entails that $\hat{g} (\zeta)$ has at most exponential growth of order
    $1$ at $\infty$ and thus $\hat{g}$ is Laplace--summable in every non
    singular direction $d$ ($d \cap S = \varnothing$, where $S$ designates the
    finite set of singularities of $\hat{g}$).
    
    By reverting to the original variable, we get sectorial solutions on
    sectors of opening $\pi / q$ for $f (x)$.
\end{proof}

All the formal series considered in the present text will satisfy the
hypotheses of this theorem and we shall see how the singularities of their
Borel transforms can be analyzed by the action of alien operators introduced
in the previous subsection, eventually providing a control of Stokes
phenomenon for the sectorial sums of the series.

The algebra of resurgent functions which we need for the present article is
simply the algebra $\mathcal{R}_h$ of holonomic functions $f$: for any $q \in
\mathbb{Q}^{>}$, the function $g (x) := f \left( x^{\frac{1}{q}}
\right)$ has a Borel transform which is characterized by its minor, which has
isolated singularities and $g$ is thus resurgent and amenable to alien
calculus.
 
\section{Resurgent study of the family $(E_k)$}\label{s:resEk}

\subsection{The first equation}

All the equations in the family $(E_k)_{k \geqslant 2}$ will be tractable by
the same techniques, yet it is worthwhile to start by an exhaustive study of
$E_2$ because the resurgence properties will be accessible in a
straightforward way.

For $k = 2$, we have thus obtained above the following linear ODE, with
$\partial_{\lambda} = \frac{d}{d \lambda}$:
\[ [(4 \lambda \partial_{\lambda} + 1) (4 \lambda \partial_{\lambda} + 3) +
\partial_{\lambda}] f (\lambda) = 0 \qquad (E_2) \]
For any given $a_0$ this equation has a unique solution in $\mathbb{C}
[[\lambda]]$, namely
$\widetilde{f} (\lambda) = \sum_{n=0}^\infty a_n \lambda^n$, with
\[ a_{n + 1} = - \frac{(4 n + 1) (4 n + 3)}{n + 1} a_n \]
and $\widetilde{Z_0} (\lambda)$ is the solution of $(E_2)$ in $\mathbb{C}
[[\lambda]]$ with $a_0 = 1$.

This recurrence relation immediately entails (for any $a_0 \neq 0$) the
\emph{exact} Gevrey--1 rate of growth for the coefficients $a_n$, with:
\[ | a_n | \sim 16^n n! | a_0 |. \]
From this estimate, we can already deduce that
$\sum_{n=0}^\infty a_n \lambda^n$ is indeed
divergent; it has $z = \frac{1}{\lambda}$ as critical time and the distance to
the origin of the closest singularity in the Borel plane of
$\mathcal{B}\left(\widetilde{Z_0}\right)$ is $\frac{1}{16}$ but
considerations on the Newton's polygon and the \emph{Borel transform of
    equation} $(E_2)$ will easily
yield a more precise results.

If $x=\lambda$ and $\theta=x\frac{d}{dx}$ the operator associated to $(E_2)$ is
\[ H_2=16\theta^2 +(16+x^{-1})\theta +3 \]
that corresponds to the Newton's polygon drawn in Fig.~\ref{fig:NP_H2}.

\begin{figure}
    \begin{center}
        \includegraphics{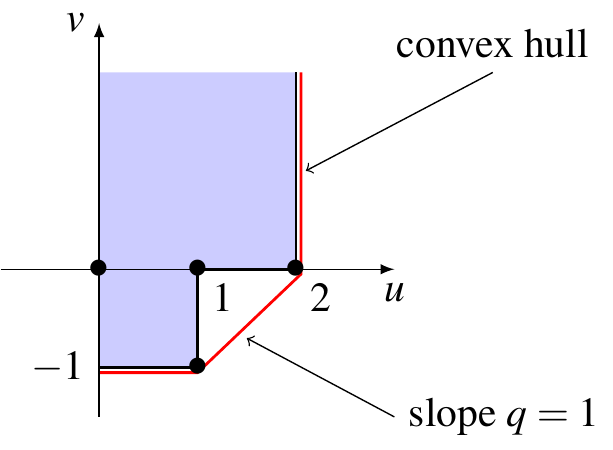}
    \end{center}
    \caption{\label{fig:NP_H2}
        NP corresponding to $H_2 = 16\theta^2 + (16 + x^{-1})\theta + 3$.
    }
\end{figure}

The algorithm in section \ref{sec:NP} applies here :
\begin{itemize}
    \item[Step 1] The indicial equation gives $\beta=0$: $(E_2)$ as a 
    nontrivial formal solution in $\mathbb{C}[[x]]$, for example 
    $\widetilde{Z_0}(x)$.
    \item[Step 2] The unique positive slope is $q=1$ and
    \begin{align*}
    H_{2,u}&= e^{-\frac{u}{x}}H_2 e^{\frac{u}{x}}\\
    & = 16(\theta-ux^{-1})^2 + (16+x^{-1}) (\theta-ux^{-1}) + 3 \\
    &= 16 \theta^2 +(16+(1-32u)x^{-1})\theta + 3+ u(16u-1)x^{-2},
    \end{align*}
    thus $P(u)=u(16u-1)$ with $u_1 = \frac{1}{16}$ as a root.
    \item[Step 3] We can complete the basis of formal solutions by 
    $e^{\frac{1}{16x}}\widetilde{Z}_1$ where $\widetilde{Z}_1$ cancels out the 
    operator:
    \begin{equation*}
    H_{2,\frac{1}{16}}=16 \theta^2 +(16-x^{-1})\theta + 3.
    \end{equation*}
\end{itemize}
For this latter equation, we can also find an explicit series as a solution: 
$\widetilde{Z}_1(\lambda) = \sum_{n=0}^\infty b_n \lambda^n$ with the following 
recurrence relation:
\[ b_{n + 1} = \frac{(4 n + 1) (4 n + 3)}{n + 1} b_n \]
and of course the same growth estimates for the $b_n$ as for the $a_n$.

%\[
%\begin{array}{ccc}
%  16 x^2 f'' (x) + 32 x^3 f' (x) + 3 f (x) + f' (x) & = & 0\\
%  16 D^2 (f) (x) + 3 x^2 f + D (f) (x) & = & 0\\
%  x^{- 2} (16 D^2 (f) (x) + D (f) (x)) + 3 f (x) & = & 0\\
%  ((16 \zeta^2 + \zeta) \hat{f}  (\zeta))'' + 3 \hat{f}  (\zeta) & = & 0\\
%  (16 \zeta^2 + \zeta) \hat{f}'' (\zeta) + 2 (32 \zeta + 1) \hat{f}' (\zeta) +
%  35 \hat{f}  (\zeta) & = & 0
%\end{array}
%\]

We have seen above (see equation (\ref{eq:fhat})) that the Borel transform of 
$(E_2)$ can be written in the
following form:
\[ (16 \zeta^2 + \zeta) \hat{f}'' (\zeta) + 2 (32 \zeta + 1) \hat{f}' (\zeta) +
35 \hat{f}  (\zeta)  =  0. \]
By the general properties of the Borel transform, we have that $\widehat{Z_0}
(\zeta)$ cancels out the corresponding operator (up to a left multiplication by 
$\zeta$)
\[
\widehat{H}_2=(16\zeta+1)\theta^2 +(48\zeta+1)\theta +35 \zeta 
\] that has no singular point at $\zeta=0$: its solutions are analytically
continuable in $\mathbb{C}_{\zeta}$ with as only singularity the point
$\omega = - \frac{1}{16}$. 

As in section \ref{euler}, the Newton polygon of $\widehat{H}_2$ at $\infty$ 
has only a zero slope which entails that $\widehat{Z_0} (\zeta)$ has at most a 
polynomial growth at $\infty$, along any direction but the real
negative one.
As a consequence, $\widetilde{Z_0}$ is 1--summable in every direction except
$\mathbb{R}_{<}$ and there is only one alien derivation that can act non
trivially on it, namely $\Delta_{_{- \frac{1}{16}}}$.

For the solution corresponding to $H_{2,\frac{1}{16}}$, one can check that this 
is the same NP as for $H_2$ and that the Borel transform corresponds to the 
equation \[
(16 \zeta^2 - \zeta) \hat{f}'' (\zeta) + 2 (32 \zeta - 1) \hat{f}' (\zeta) +
35 \hat{f}  (\zeta)  =  0
\]
and we obtain that $\widehat{Z}_1$ is resurgent with a single singularity at
$\omega = \frac{1}{16}$ and exponential growth at infinity.

The general solution $\Phi (\lambda)$ of $(E_2)$ is thus:
\[ \Phi (\lambda) = \sigma_0  \widetilde{Z_0} (\lambda) + \sigma_1
e^{\frac{1}{16 \lambda}}  \widetilde{Z_1} (\lambda), \]
where $\sigma_i \in \mathbb{C}$.
We already know that the only alien derivations $\Delta_{\omega}$ which act
non trivially on $\widetilde{Z_0} (\lambda)$ and $\widetilde{Z_1} (\lambda)$
are respectively $\Delta_u$ and $\Delta_{- u}$ , with $u = - \frac{1}{16}$ but
a key remark is that we can also \emph{deduce that from the shape of the
    general solution above}.

Indeed, for any $\omega \in \mathbb{C}$, we have:
\begin{align*}
e^{- \frac{\omega}{\lambda}} \Delta_{\omega} \Phi (\lambda)
&= \sigma_0 e^{-
    \frac{\omega}{\lambda}} \Delta_{\omega} \widetilde{Z_0} (\lambda) +
\sigma_1 e^{- \frac{\omega}{\lambda}} \Delta_{\omega} e^{\frac{1}{16
        \lambda}}  \widetilde{Z_1} (\lambda) \\
&= \sigma_0 e^{-
    \frac{\omega}{\lambda}} \Delta_{\omega} \widetilde{Z_0} (\lambda) +
\sigma_1 e^{\frac{\frac{1}{16} - \omega}{\lambda}} \Delta_{\omega} 
\widetilde{Z_1} (\lambda)
\end{align*}
but $e^{- \frac{\omega}{\lambda}} \Delta_{\omega} \Phi (\lambda)$ is also a
solution of $(E_2)$, because $e^{- \frac{\omega}{\lambda}} \Delta_{\omega}$
commutes with $\partial_{\lambda}$ and vanishes on the polynomial coefficients
of the equation.

Thus, there exist 2 complex constants $c$ and $d$ such that:
\[ e^{- \frac{\omega}{\lambda}} \Delta_{\omega} \Phi (\lambda) = c
\widetilde{Z_0} (\lambda) + d e^{\frac{1}{16 \lambda}}  \widetilde{Z_1}
(\lambda) \]
and this \emph{entails} that the only alien derivations which can act non
trivially correspond to the 2 following indices:

$\omega = - u$, with \ $\Delta_{- u} \widetilde{Z_0} (\lambda) = A_{- u} 
\widetilde{Z_1} (\lambda)$ $(A_{- u} \in \mathbb{C})$ and $\Delta_{-
    u} \widetilde{Z_1} (\lambda) = 0$

$\omega = u$, with $\Delta_u \widetilde{Z_1} (\lambda) = A_u  \widetilde{Z_1}
(\lambda)$  $(A_u \in \mathbb{C})$ and $\Delta_u \widetilde{Z_1} (\lambda)
= 0$

This is an illustration, \emph{in an elementary situation}, of the power of
alien calculus: once we know that the series we are dealing with are
resurgent, we can easily get highly non trivial relations just by taking into
account formal rules and relations of homogeneity when applying alien
derivations to them.

The previous 2 resurgence relations can be expressed in the following more
compact form:
\begin{equation}\label{eq:be2} \pmb{\Delta}_{\omega} \Phi = \left( A_0 \sigma_0 
\frac{\partial}{\partial \sigma_0} + A_1 \sigma_1  \frac{\partial}{\partial
    \sigma_1} \right) \Phi 
\end{equation}
We see here that alien derivations act on the formal integral (the general
solution of $E_2$) as ordinary differential operators.

We thus have a bridge between alien and ordinary calculus and \eqref{eq:be2}
is an elementary instance of the \emph{bridge equation}, which is a very
general fact: for all the known cases of analysis of irregular singularities
of functional (differential, difference, etc) with the apparatus of resurgence
and alien calculus, some form of the bridge equation, depending upon the class
of equation under study, can be explicited and this fact has numerous
important consequences.

\begin{remark}
    The partition function for $k = 2$ is of course to be found in many articles
    and textbooks, for various illustrative purposes. \emph{Notably}, a very
    thorough resurgent study of it is thus done in section 2 of 
    Ref.~\onlinecite{ANICETO2019} -- with a
    slightly different normalization; as shown in this reference, $E_2$ belongs
    to the hypergeometric family (it is a modified Bessel eq.) and so does its
    Borel transform, which gives access to exact values of the coefficients
    $A_{\omega}$.
    
    It is however more important in the present work, rather than relying on
    properties of the hypergeometric family to stress the efficiency of the
    general formal arguments used above, which will work in a similar way for
    higher dimensional equations and also in non linear situations, where as a
    rule no exact values will exist for the coefficients of the bridge equation.
\end{remark}

\subsection{Formal results}

We start by the Newton polygon at $0$ of equation $(E_k)$ that corresponds to 
the operator
\begin{equation*}
H_k=\left(\prod_{j = 0}^{k - 1} (2 k \theta + 2 j + 1)
\right) + \lambda^{-1} \theta
\end{equation*}
which is given in Fig.~\ref{fig:NP_Hk_lambda}.
\begin{figure}
    \begin{center}
        \includegraphics{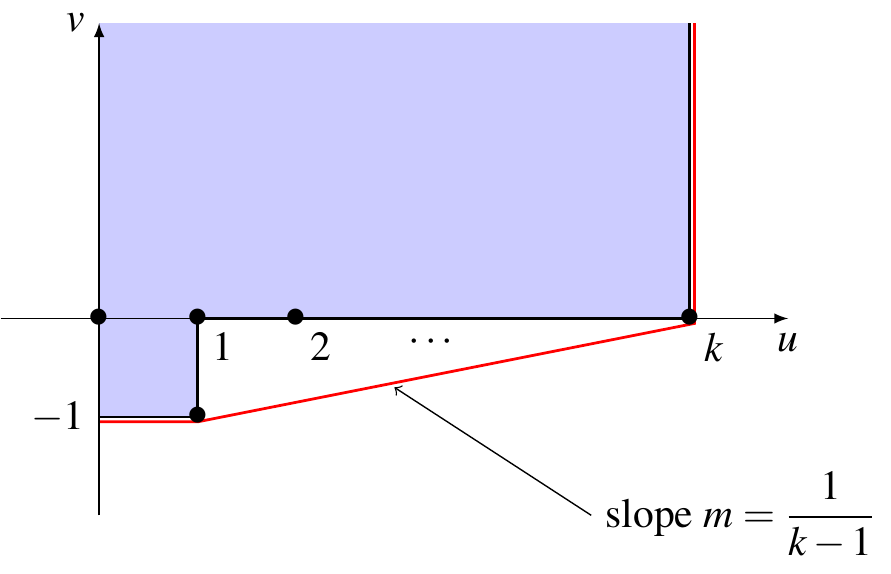}
    \end{center}
    \caption{\label{fig:NP_Hk_lambda}
        The NP for $H_k$ with $\lambda$ as variable
    }
\end{figure}
It has one horizontal slope, of length one, and only one non--zero slope, of
value $m = \frac{1}{k - 1}$

Accordingly, $(E_k)$ has one formal series solution $\widetilde{Z_0}
(\lambda)$ with a constant coefficient equal to $1$ (the indicial equation gives
$\beta=0$, which is Gevrey--1 \emph{with respect to the variable}
$z_k = \frac{1}{\lambda^m}$, as a consequence of the
propositions above).
This unique slope suggests to study the equation in the variable $x=\lambda^m$
($\lambda=x^{k-1}$) and since
\[
\theta_{\lambda}=\lambda\frac{\partial}{\partial 
\lambda}=mx\frac{\partial}{\partial x}=m\theta_x
\]
we will focus on the operator in the variable $x$:
\begin{equation*}
H_k = \left(\prod_{j = 0}^{k - 1} (2 km\theta_x + 2 j + 1) \right)
+mx^{1-k}\theta_x
\end{equation*}
whose Newton's polygon is given in Fig.~\ref{fig:NP_Hk_x} and has a single
non-zero slope $q=1$.

\begin{figure}
    \begin{center}
        \includegraphics{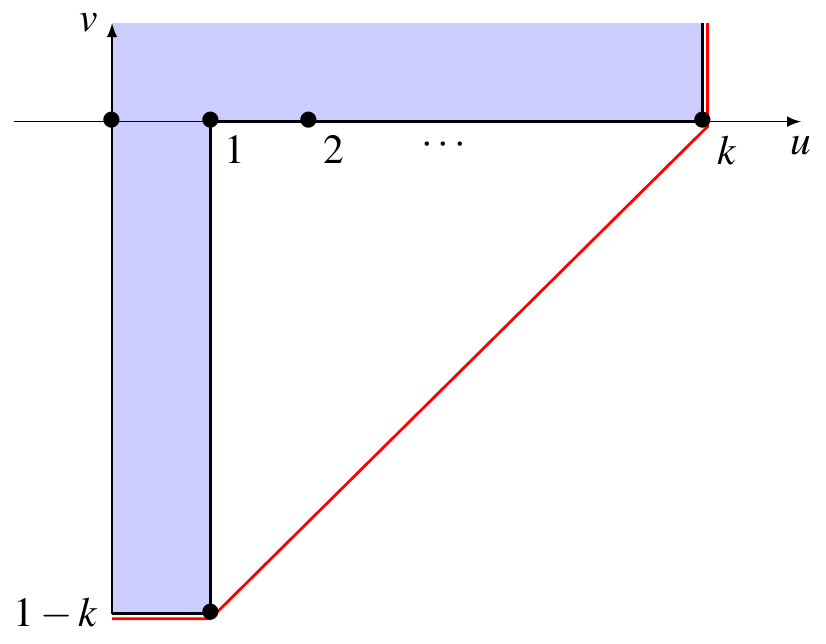}
    \end{center}
    \caption{\label{fig:NP_Hk_x}
        The NP for $H_k$ after choosing $x = \lambda^{\frac{1}{k-1}}$ as the new
        variable, notice its slope $q=1$.
    }
\end{figure}

In order to compute the operator $H_{k,u} = e^{-u/x}H_k e^{u/x}$ let us
introduce some notations and combinatorial coefficients. We first note, for
$k\geqslant 1$,
\[
\prod_{j = 0}^{k - 1} (X + 2 j + 1)=\sum_{i=0}^k a_{k,i} X^i
\]
and one can check that for any $k$, $a_{k,k}=1$ and $a_{k,k-1}=k^2$.
Second, let us define the Exponential (or Touchard) polynomials recursively by
$T_0(X)=1$ and, for $n\geqslant 1$,
\begin{equation*}
T_n(X)=\left(X+X\frac{d}{dX}\right)T_{n-1}(X)
\end{equation*}
the polynomial $T_n$ is of degree $n$ and, if $n\geqslant 1$ its coefficients
are the stirling numbers of second kind (see for instance 
Ref.~\onlinecite{Boyadzhiev}):
$$
T_{n}(X)=\sum_{k=1}^n s_{n,k}X^k
$$ for which we can notice that $s_{n,n}=1$ and $s_{n,n-1}=\frac{n(n-1)}{2}$. 
These polynomials will appear naturally in the sequel since:
\begin{equation*}
\forall n\geqslant 0,\quad e^{-u/x}\lbrack 
\theta_x^n.e^{u/x}\rbrack=(-1)^nT_n\left(\frac{u}{x}\right).
\end{equation*}

We perform now the change of unknown function $f (x) = e^{u/x} g(x)$, and use 
the previous coefficients:
\begin{align*}
e^{-\frac{u}{x}}H_k (e^{\frac{u}{x}}g(x))
&= \sum_{i=0}^k a_{k,i}e^{-\frac{u}{x}}(2km)^i\theta_x^i(e^{\frac{u}{x}}g(x))
+ e^{-\frac{u}{x}}mx^{1-k}\theta_x(e^{\frac{u}{x}}g(x)) \\
&= \sum_{i=0}^k a_{k,i}e^{-\frac{u}{x}}(2km)^i\sum_{j=0}^i\binom{i}{j}
(\theta_x^{i-j}(e^{\frac{u}{x}}))(\theta_x^{j}g(x))\\
& \quad +mx^{1-k}\theta_xg(x)-mux^{-k}g(x)\\
&= \sum_{i=0}^k a_{k,i}(2km)^i\sum_{j=0}^i\binom{i}{j} 
(-1)^{i-j}T_{i-j}\left(\frac{u}{x}\right)\theta_x^{j}g(x)\\
& \quad+mx^{1-k}\theta_xg(x)-mux^{-k}g(x)\\
&= \sum_{j=0}^k\sum_{i=j}^k a_{k,i}(2km)^i \binom{i}{j} (-1)^{i-j} 
T_{i-j}\left(\frac{u}{x}\right)\theta_x^{j}g(x)\\
&\quad+mx^{1-k}\theta_xg(x)-mux^{-k}g(x) \\
H_{k,u}g(x)&= \sum_{j=0}^k\left(\sum_{i=0}^{k-j} a_{k,j+i}(2km)^{i+j}
\binom{i+j}{j} (-1)^{i}T_{i}\left(\frac{u}{x}\right)\right)\theta_x^{j}g(x))\\
& \quad +mx^{1-k}\theta_xg(x)-mux^{-k}g(x)
\end{align*}

For a ``generic'' value of $u$, the operator $H_{k,u}$ has the Newton polygon
shown in Fig.~\ref{fig:NP_Hku}
\begin{figure}
    \begin{center}
        \includegraphics{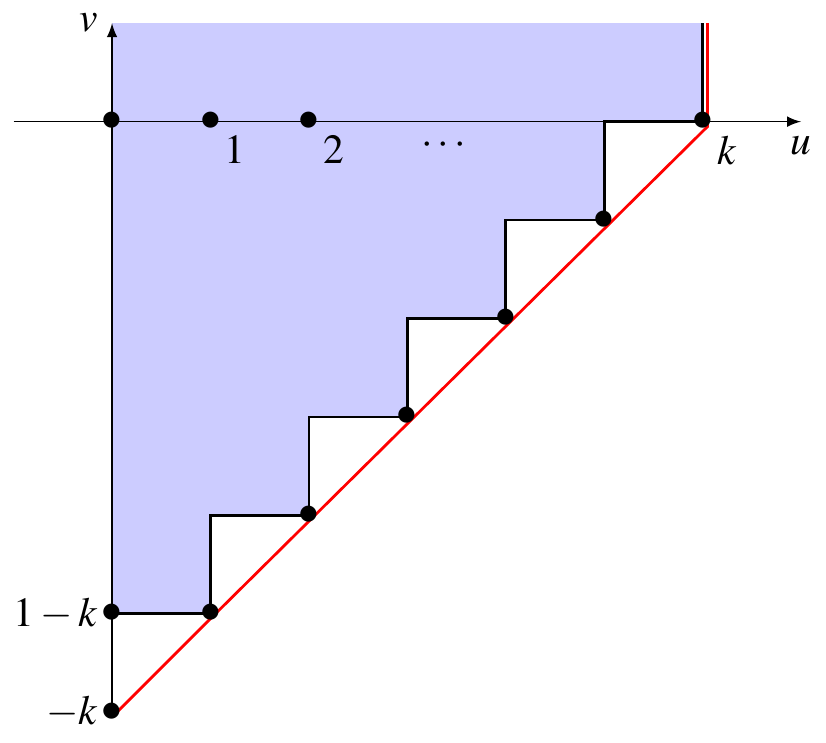}
    \end{center}
    \caption{\label{fig:NP_Hku}
        NP at $0$ of $H_{k,u} = e^{-\frac{u}{x}}H_ke^{\frac{u}{x}}$ for a
        ``generic'' $u\in\mathbb{C}$.
    }
\end{figure}
and we can specify three coefficients in the operator:
\begin{itemize}
    \item The coefficient of $x^{-k} \theta_x^0$ is 
    \[ P_k(u)=a_{k,k}(-1)^k (2km)^ks_{k,k} u^k -mu =-mu +(-2kmu)^k, \]
    that has only simple roots and for each of its roots, the point $(0,-k)$ 
    disappears.
    \item The coefficient of $x^{1-k}\theta_x$ is
    \begin{align*}
    Q_k(u)
    &= m+a_{k,k}(2km)^k \binom{k}{1} (-1)^{k-1} s_{k-1,k-1} u^{k-1} \\
    & =m+2k^2m(-2kmu)^{k-1}
    \end{align*}
    so that, whenever $P_k(u)$ vanishes, $Q_k(u)$ does not: for each root of
    $P_k(u)$, the operator $H_{k,u}$ has the same NP as $H_k$ so there exists a
    Gevrey--1 series (up to a factor $x^{\beta}$) that cancels out this 
    operator.
    \item The coefficient of $x^{1-k}\theta_x^0$ is
    \begin{multline*}
    a_{k,k-1}(-2km)^{k-1}s_{k-1,k-1}u^{k-1} +a_{k,k}(-2km)^k s_{k,k-1}u^{k-1} \\
    =(-2kmu)^{k-1}\left(k^2-2km\frac{k(k-1)}{2}\right)=0,
    \end{multline*}
    since $m=\frac{1}{k-1}$. This means that, for a root of $P_k(u)$ the 
    indicial equation (that corresponds to the coefficient of lowest degree in 
    $H_{k,u}.x^\beta$) is simply $\beta Q_k(u)=0$ thus $\beta=0$.
\end{itemize}

We can now state the following:

\begin{proposition}
    The equation $(E_k)$, in the variable $x=\lambda^m$ has as general solution:
    \[ 
    \Phi (x) = \sigma_0  \widetilde{\Phi_0} (x)
    + \sigma_1
    e^{\frac{u_1}{x}}  \widetilde{\Phi_1} (x) + \ldots + \sigma_{k - 1}
    e^{\frac{u_{k - 1}}{x} } \widetilde{\Phi_k} (x),\quad
    \sigma_0, \ldots \sigma_{k - 1} \in \mathbb{C} \]
    where:
    \begin{enumerate}
        \item $m = \frac{1}{k - 1}$.
        
        \item $u_0 = 0, u_1, \ldots, u_{k - 1}$ are the $k$ roots of the 
        polynomial
        $$P (u) = - m u [(- 1)^{k - 1} (2 k)^k (m u)^{k - 1} + 1].$$
        
        \item The formal series $\widetilde{\Phi_i} (x)$ are Gevrey of order 
        $1$.
        \item The formal series $\widetilde{Z_0} (\lambda)$ is proportionnal to
        $\widetilde{\Phi_0}(\lambda^m)$ and thus Gevrey of order
        $s =\frac{1}{m} = k - 1$.
    \end{enumerate}
\end{proposition}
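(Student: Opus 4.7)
The plan is to apply the three-step algorithm of section \ref{sec:NP} to the operator $H_k$ after the change of variable $x=\lambda^m$, relying entirely on the Newton-polygon and commutator computations already carried out in the excerpt.

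First, I would invoke \textbf{Step 1}: from Fig.~\ref{fig:NP_Hk_x} the Newton polygon of $H_k$ in the variable $x$ has one horizontal slope of length one and a single positive slope $q=1$; the indicial polynomial at the horizontal slope is $Q_k(0)\beta+\text{(constant term)}=\beta$ up to a nonzero multiplier (since the only monomial $\theta_x^0 x^0$ in $H_k$ is $\prod_{j=0}^{k-1}(2j+1)\neq 0$—actually one needs to recompute it from the shifted operator, but the horizontal-slope argument of the algorithm gives directly $\beta=0$). This produces the unique (up to scalar) formal series solution $\widetilde{\Phi_0}(x)\in\mathbb{C}[[x]]$, which is Gevrey--1 in $x$ by the main theorem of section \ref{sec:NP}.

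Next I would carry out \textbf{Steps 2 and 3} using the conjugated operator $H_{k,u}=e^{-u/x}H_k e^{u/x}$ already expanded in the excerpt. The coefficient of $x^{-k}\theta_x^0$ is exactly
\[
P_k(u)=-mu+(-2kmu)^k = -mu\bigl[\,1+(-1)^{k-1}(2k)^k(mu)^{k-1}\,\bigr],
\]
which matches the stated $P(u)$. Aside from the trivial root $u_0=0$, the equation $(mu)^{k-1}=(-1)^k(2k)^{-k}$ has exactly $k-1$ distinct nonzero roots $u_1,\ldots,u_{k-1}$ (the $(k-1)$-th roots of a nonzero complex number). For each such $u_i$ the excerpt already verifies the two key non-degeneracy facts: the coefficient $Q_k(u_i)$ of $x^{1-k}\theta_x$ is nonzero (so $H_{k,u_i}$ has a Newton polygon of the same shape as $H_k$), and the coefficient of $x^{1-k}\theta_x^0$ vanishes identically because $m=1/(k-1)$, so that the indicial equation of $H_{k,u_i}$ reduces to $\beta\, Q_k(u_i)=0$, forcing $\beta=0$. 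Re-applying Step 1 to each $H_{k,u_i}$ therefore yields Gevrey--1 formal series $\widetilde{\Phi_i}(x)\in\mathbb{C}[[x]]$ with $H_{k,u_i}\widetilde{\Phi_i}=0$, i.e.\ $H_k\bigl(e^{u_i/x}\widetilde{\Phi_i}(x)\bigr)=0$. The $k$ solutions $\widetilde{\Phi_0}(x),\,e^{u_1/x}\widetilde{\Phi_1}(x),\ldots,e^{u_{k-1}/x}\widetilde{\Phi_{k-1}}(x)$ are linearly independent over $\mathbb{C}$ because their determining exponentials are pairwise distinct; since $(E_k)$ has order $k$, they form a basis of the formal solution space, giving the announced form of $\Phi(x)$.

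It remains to identify $\widetilde{\Phi_0}$ with $\widetilde{Z_0}$ and to convert the Gevrey rate. The series $\widetilde{Z_0}(\lambda)$, viewed through $\lambda=x^{k-1}$, is a formal power series in $x$ annihilated by $H_k$ and with nonzero constant term; uniqueness in Step 1 forces $\widetilde{Z_0}(x^{k-1})=c\,\widetilde{\Phi_0}(x)$ for some $c\in\mathbb{C}^\ast$. In particular $\widetilde{\Phi_0}(x)$ only contains powers of $x^{k-1}$, and if $\widetilde{\Phi_0}(x)=\sum a_n x^n$ is Gevrey--1 in $x$ with $|a_{(k-1)n}|\leqslant CA^{(k-1)n}((k-1)n)!$, Stirling gives $((k-1)n)!\leqslant C'D^n(n!)^{k-1}$, so the series $\sum a_{(k-1)n}\lambda^n$ is Gevrey of order $s=k-1$ in $\lambda$, as claimed.

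The main technical obstacle is really \textbf{Step 3}: one must make sure both that $P_k$ has exactly $k-1$ nonzero simple roots (transparent here from its explicit form) and that the indicial equation at each $u_i$ has a simple root at $\beta=0$, which is what the identity $m=1/(k-1)$ delivers through the cancellation in the coefficient of $x^{1-k}\theta_x^0$; beyond that, the proposition follows by a routine assembly of the three algorithmic steps.
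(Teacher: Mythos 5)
Your proof follows essentially the same route as the paper: invoke the three-step Newton-polygon algorithm from section~\ref{sec:NP}, use the explicit $P_k(u)$ and $Q_k(u)$ and the cancellation of the $x^{1-k}\theta_x^0$ coefficient (which hinges on $m=1/(k-1)$) to get $\beta=0$ at every determining factor, and then convert the Gevrey-1 estimate in $x$ to a Gevrey-$(k-1)$ estimate in $\lambda=x^{k-1}$. The paper leaves most of these steps implicit since the preceding computations already do the work; your write-up simply assembles them, adding a brief linear-independence remark and the Stirling argument for the Gevrey-order conversion, both of which are correct and in the intended spirit.
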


%where:
%\begin{enumerate}
%  \item $m = \frac{1}{k - 1}$
%  
%  \item $u_0 = 0, u_1, \ldots, u_{k - 1}$ are the $k$ roots of the polynomial
%  $$P (u) = - m u ((- 1)^{k - 1} (2 k)^k (m u)^{k - 1} + 1)$$
%  
%  \item The formal series $\widetilde{Z_i} (\lambda)$ are Gevrey of order $s =
%  \frac{1}{m} = k - 1$
%\end{enumerate}
%

\subsection{Resurgence and the bridge equation}

The results concerning the general formal solutions of equation $E_k$,
together with the general properties of resurgence for holonomic functions at
an irregular singularity with one--level enable us to state the following:

\begin{proposition}
    With the notations of the last proposition, the formal series
    $\widetilde{\Phi_i} (x)$ are resurgent functions for the critical time
    $z = \frac{1}{x}$ ; each has a finite number of singularities in
    the Borel plane, which are regular--singular.
    
    Moreover, the only alien derivations which may act non trivially on $\Phi$ 
    are
    the ${}^{z} \Delta_{u_i - u_j}$ where $0 \leqslant i, j \leqslant k - 1$
    and $i \neq j$ and we have:
    \[ {}^{z} \Delta_{u_i - u_j} (\widetilde{\Phi_i} (x)) = A_{i, j} 
    \widetilde{\Phi_j} (x), \quad A_{i, j} \in \mathbb{C}, \]
    which can be summarized in the following bridge equation
    \[ {}^{z} \pmb{\Delta}_{u_{_i} - u_j} \Phi = \sum A_{i, j} u_j 
    \frac{\partial}{\partial u_j} \Phi. \]
\end{proposition}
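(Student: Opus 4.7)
The plan is to reproduce for general $k$ the three-step strategy already carried out for $E_2$: establish resurgence from the Newton polygon, show that $\pmb{\Delta}_\omega$ transports the general formal solution into another solution, and then match against the basis to identify which alien derivations can act and pin down the shape of the bridge equation.

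First, I would derive the resurgence of each $\widetilde{\Phi_i}$ by invoking the main result of section \ref{sec:holoborel}. The series $\widetilde{\Phi_0}$ is a formal solution of the holonomic operator $H_k$ in the variable $x=\lambda^m$, whose Newton polygon at the origin -- as shown in the previous subsection -- has a single positive slope equal to $1$. The theorem on resurgence of holonomic functions with one critical time then applies verbatim: $\widetilde{\Phi_0}$ is resurgent with critical time $z=1/x$, its Borel transform is itself holonomic in $\zeta$ and therefore has only finitely many singularities, which are regular--singular because the Newton polygon of the Borel--transformed operator has no positive slope at any of its finite singular points (the standard consequence of the single level--$1$ structure recalled in section \ref{sec:NP}). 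For $i\geqslant 1$, $\widetilde{\Phi_i}$ is a formal solution of the conjugated operator $H_{k,u_i}=e^{-u_i/x}H_k e^{u_i/x}$, which shares the same Newton polygon shape, so the same reasoning applies.

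Next, exactly as in the $E_2$ case, I would show that $\pmb{\Delta}_\omega\Phi$ is again a formal solution of $(E_k)$ for any $\omega\in\mathbb{C}$. Transcribed in the variable $z=1/x$, the operator $H_k$ has coefficients that are polynomials in $1/z$; such coefficients have Borel transforms with no singularity away from the origin and are therefore annihilated by $\Delta_\omega$ for $\omega\neq 0$. Since $\pmb{\Delta}_\omega=e^{-\omega z}\Delta_\omega$ is a derivation that commutes with $\partial_z$, it commutes with $H_k$ itself, so $\pmb{\Delta}_\omega$ sends formal solutions of $(E_k)$ to formal solutions of $(E_k)$.

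The crux of the argument is the last step: decomposing $\pmb{\Delta}_\omega\Phi$ in the basis $\{e^{u_j/x}\widetilde{\Phi_j}\}_{0\leqslant j\leqslant k-1}$. Using the translation rule $\Delta_\omega(e^{u_iz}f)=e^{u_iz}\Delta_{\omega-u_i}f$ for alien derivations, one obtains
\[
\pmb{\Delta}_\omega\Phi \;=\; \sum_{i=0}^{k-1}\sigma_i\,e^{(u_i-\omega)z}\,\Delta_{\omega-u_i}\widetilde{\Phi_i}.
\]
For the right--hand side to coincide with a linear combination of the basis elements $e^{u_jz}\widetilde{\Phi_j}$, each non--vanishing exponent $u_i-\omega$ must lie in the finite set $\{u_0,\ldots,u_{k-1}\}$, which forces $\omega\in\{u_i-u_j\,:\,0\leqslant i,j\leqslant k-1,\ i\neq j\}$ and rules out any other alien derivation acting non--trivially on $\Phi$. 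For each such $\omega$, matching the coefficient of $e^{u_jz}$ on both sides yields a resurgence relation $\Delta_{u_i-u_j}\widetilde{\Phi_i}=A_{i,j}\widetilde{\Phi_j}$ for a complex constant $A_{i,j}$, and repackaging these across the $\sigma_i$--dependence of $\Phi$ produces the stated bridge equation. The one piece the elementary method does not deliver is the explicit value of the connection constants $A_{i,j}$, which for $k\geqslant 3$ lies beyond these formal manipulations and requires substantially finer tools than the hypergeometric identities exploited for $k=2$.
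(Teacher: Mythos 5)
Your overall strategy matches the paper's almost step for step: establish resurgence from holonomy and the single slope of the Newton polygon, observe that $\pmb{\Delta}_\omega$ maps formal solutions of $(E_k)$ to formal solutions, and then match $\pmb{\Delta}_\omega\Phi$ against the basis $\{e^{u_j/x}\widetilde{\Phi_j}\}$. So the architecture is sound. However, the translation rule you invoke, $\Delta_\omega(e^{u_i z}f)=e^{u_i z}\Delta_{\omega-u_i}f$, is not the rule the formalism actually uses, and it is in fact inconsistent with the conclusion you draw from it. The correct rule, which is precisely what the paper applies in the $(E_2)$ discussion, is $\Delta_\omega(e^{u_i z}f)=e^{u_i z}\Delta_\omega f$: the pointed derivation $\pmb{\Delta}_\omega=e^{-\omega z}\Delta_\omega$ is a derivation that commutes with $\partial_z$ and annihilates pure exponentials, from which $\pmb{\Delta}_\omega(e^{u_i z}\widetilde{\Phi_i})=e^{u_i z}\pmb{\Delta}_\omega\widetilde{\Phi_i}=e^{(u_i-\omega)z}\Delta_\omega\widetilde{\Phi_i}$. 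With your shifted index, the coefficient of $e^{u_j z}$ in $\pmb{\Delta}_{u_i-u_j}\Phi$ would come out as $\sigma_i\Delta_{-u_j}\widetilde{\Phi_i}$, not $\sigma_i\Delta_{u_i-u_j}\widetilde{\Phi_i}$; yet you write down the latter. The resurgence relation you state is the right one, but it does not follow from the rule you wrote — the algebra needs to be redone with the unshifted $\Delta_\omega$, which then gives the set $\{u_i-u_j\}$ and the relation $\Delta_{u_i-u_j}\widetilde{\Phi_i}=A_{i,j}\widetilde{\Phi_j}$ cleanly.

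A smaller divergence: you justify the regular--singular character of the Borel singularities by asserting that the Newton polygon of the Borel--transformed operator has no positive slope at its finite singular points, calling this a standard consequence of the level--$1$ structure. That claim is defensible (it amounts to the leading coefficient of $\widehat{H}$ vanishing to first order there, as one sees explicitly for $\widehat{H}_2$), but section~\ref{sec:NP} only discusses the Newton polygon at $0$ and at $\infty$, so you are appealing to something the paper has not established. The paper instead reverses the order of logic: it first derives the bridge equation, and then observes that since $\Delta_{u_i-u_j}\widetilde{\Phi_i}$ is a constant times the formal power series $\widetilde{\Phi_j}$ (with no exponential or ramified factor), the singularities of $\widehat{\Phi_i}$ must be logarithmic, hence regular--singular. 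That argument is entirely self-contained within what has already been proved, and you may prefer to adopt it rather than import an unproved Newton-polygon fact.
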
  

\begin{proof}
    Each $\widetilde{\Phi_i} (x)$ is holonomic and  Gevrey--1 by the formal 
    constructions
    above and we have seen that $\widetilde{\Phi_i} (x)$ has a single
    critical time $z = \frac{1}{x}$.
    
    Accordingly, the Borel transform of
    $\widetilde{\Phi_i}  (x)$ with respect to the critical time $z =
    \frac{1}{x}$ is holonomic and as such it is resurgent, with a finite
    number of singularities which belong to the set of singular points of the
    differential equation in the Borel plane, and exponential growth of order
    $1$ at $\infty$.
    
    The formal integral $\Phi (\lambda)$ is thus resurgent with respect to the
    single critical time $z$. It is then licit to apply to it the operators
    ${}^{z} \Delta_{\omega}$ and the same formal argument used for $(E_2)$
    works:
    
    for any $\omega \in \mathbb{C}$ and any $i \in \{ 0, \ldots, k - 1 \}$,
    $\pmb{\Delta}_{\omega} (\widetilde{\Phi_i} (x))$ is a solution of
    $(E_k)$ and this forces $\omega \in \{ u_i - u_j ; j = 0, \ldots, k - 1; j
    \neq i \}$, with ${}^{z} \Delta_{u_i - u_j} (\widetilde{\Phi_i} (x))
    = \text{constant} . \widetilde{\Phi_j} (x)$.
    
    Finally, the resurgence relations entail that all the singularities of the
    functions $\widehat{\Phi_i}$ are logarithmic and thus of the
    regular--singular type.
\end{proof}

At this stage, we have access to all the information for $\widetilde{\Phi_0}
(x)$ and all the $\widetilde{\Phi_j} (x)$, through the coefficients
of the bridge equation: we have a control on the Stokes phenomenon for these
functions, which for example can be crucial for questions of \emph{real
    resummation}\cite{M97,EM95}.

Let us finally remark that, in the present case, the location of the
singularities could have been directly deduced from the Borel transform of the
operator $H_{k,u_l}$. 
 
\section{Non--linear operations}\label{s:nonlin}

Let us recall the following general result, for resurgent functions\cite{E1,E3}:

\begin{proposition}
    Let $\varphi$ be a resurgent function, with
    $\widetilde{\varphi} (z) \in z^{- 1}\mathbb{C} [[z^{- 1}]]$ and $\chi$ an
    analytic function at the origin; then
    $\chi \circ \varphi$ is resurgent. Moreover, for any $\omega \in
    \mathbb{C}^{\ast}$
    \[ \Delta_{\omega}  (\chi \circ \varphi) = (\partial \chi \circ \varphi)
    \Delta_{\omega} \varphi, \qquad (\text{with } \partial = d / d z). \]
\end{proposition}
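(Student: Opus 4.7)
The plan is to reduce everything to the Taylor expansion of $\chi$, for which the only subtle point is showing that the resulting series converges in the space of resurgent functions; the algebraic identity for $\Delta_\omega$ then follows termwise.

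First, since $\chi$ is analytic at $0$, write $\chi(w) = \sum_{n \geqslant 0} c_n w^n$, convergent for $|w| < R$ with $|c_n| \leqslant A R^{-n}$. Because $\widetilde{\varphi}(z)$ has valuation at least $1$ in $z^{-1}$, the formal composition
\[
\chi \circ \widetilde{\varphi}(z) = \sum_{n=0}^{\infty} c_n\, \widetilde{\varphi}(z)^n
\]
is a well-defined element of $\mathbb{C}[[z^{-1}]]$: each coefficient in $z^{-1}$ picks up contributions from only finitely many $n$'s.

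Second, I would show that $\chi \circ \varphi$ is resurgent by controlling its Borel transform. Since the Borel transform converts the ordinary product into convolution, $\widehat{\varphi^n} = \widehat{\varphi}^{\ast n}$. The key analytic estimate to invoke is the standard bound for iterated convolution of an endlessly continuable germ: on any compact $K$ of the universal cover of $\mathbb{C} \setminus \Omega$ (where $\Omega$ is the singular support of $\widehat{\varphi}$) reachable along a prescribed broken path of length $L$, one has $\|\widehat{\varphi}^{\ast n}\|_K \leqslant M(K)^n / n!$ for some constant $M(K)$. Combining this with $|c_n| \leqslant AR^{-n}$ shows that $\sum_n c_n \widehat{\varphi}^{\ast n}$ converges uniformly on such compacts and defines an endlessly continuable germ whose singular support is still contained in $\Omega$. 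This exactly says that $\chi \circ \varphi \in \mathcal{F}$.

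Third, I would use the fact that each $\Delta_\omega$ is a derivation of the convolution algebra: a straightforward induction gives $\Delta_\omega(\varphi^n) = n\, \varphi^{n-1}\, \Delta_\omega \varphi$ for every $n \geqslant 1$. Applying $\Delta_\omega$ termwise to the uniformly convergent sum above (which is permitted since $\Delta_\omega$ is continuous for uniform convergence on compacts in the Borel plane) yields
\[
\Delta_\omega(\chi \circ \varphi) = \sum_{n \geqslant 1} c_n\, n\, \varphi^{n-1}\, \Delta_\omega \varphi = (\chi' \circ \varphi)\, \Delta_\omega \varphi,
\]
which is the announced formula, $\partial \chi$ denoting the derivative of $\chi$ as a series in its own variable.

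The main obstacle is the quantitative step of the second paragraph: establishing that the convolution powers $\widehat{\varphi}^{\ast n}$ satisfy the $M^n/n!$ bound along arbitrary broken paths in $\mathbb{C} \setminus \Omega$, so that the composition series actually converges in the endlessly continuable class. Once that analytic ingredient is in place, the remaining work is purely formal and follows from the derivation property of $\Delta_\omega$.
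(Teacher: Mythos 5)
The paper does not prove this proposition: it is stated as a recall of Ecalle's composition theorem, with the citations \cite{E1,E3} and, in the remark immediately following, an explicit deferral to Ecalle's foundational work (and to Kamimoto--Sauzin later in the section, for the resurgence of $\log$ of a resurgent function). So there is no in-paper argument against which to measure yours; what you have done is reconstruct the standard proof that the paper declines to give.

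Your route is indeed the standard one: Taylor-expand $\chi$, pass to the Borel plane where $\widehat{\varphi^n}=\widehat{\varphi}^{\ast n}$, control the iterated convolutions to get convergence, then push $\Delta_\omega$ through termwise via the derivation property. You are also right that the substantive ingredient is the quantitative bound $\|\widehat{\varphi}^{\ast n}\|_K\leqslant M(K)^n/n!$ uniformly along broken paths; this is precisely the technical content carried by the references the paper cites, and the rest is formal algebra.

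One inaccuracy that should be corrected: the singular support of $\widehat{\varphi}^{\ast n}$ is \emph{not} contained in $\Omega$. Convolution generates new singularities at sums: already $\widehat{\varphi}\ast\widehat{\varphi}$ can be singular at $\omega_1+\omega_2$ for $\omega_1,\omega_2\in\Omega$ (e.g.\ $\widehat{\varphi}(\zeta)=(1-\zeta)^{-1}$ gives $\widehat{\varphi}^{\ast 2}(\zeta)=-2\log(1-\zeta)/(2-\zeta)$, singular at $1$ and $2$). The correct statement replaces $\Omega$ by the additive semigroup it generates, which is exactly what the paper says a few lines after this proposition when discussing $W=\log Z_0$. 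Your estimate must therefore live on compacts of the universal cover of $\mathbb{C}$ minus that semigroup, not $\mathbb{C}\setminus\Omega$. This does not threaten the conclusion (endless continuability is preserved, and for $\omega$ in the semigroup but not in $\Omega$ both sides of the alien-derivation identity vanish, those extra singularities living on higher sheets and being reached only by compositions of alien derivations), but as written the second step of your argument is not quite correct.
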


\begin{remark}
    This proposition is a particular case of a much more general result on
    composition of resurgent functions, possibly involving in particular 
    Puiseux series
    $\widetilde{\varphi} (z)$. We shall only need the present version and refer 
    to
    the foundational papers by Ecalle for a treatment involving a systematic use
    of majors, for resurgent functions which are non necessarily integrable.
\end{remark}

As $\widetilde{Z_0} (\lambda) - 1 \in \lambda \mathbb{C} [[\lambda]]$, we can
thus state that the free energy $W (\lambda) = \log \widetilde{Z_0}(\lambda)$
is a resurgent function (see Ref.~\onlinecite{KS15} for a recent proof of the 
resurgent character of the logarithm of a resurgent function, under hypotheses 
which are satisfied in our rather elementary situation).
The function $W$ will have an infinite set of singularities in the Borel plane, 
that is the additive semigroup generated by the singularities of 
$\widetilde{Z_0} (\lambda)$ in the Borel plane, yet
only a finite number of alien derivations will act non-trivially on it:
the same ones as for $\widetilde{Z_0} (\lambda)$. There is no contradiction
here, the singularities of the Borel transform of $\widetilde{Z_0} (\lambda)$
are in various sheets of its Riemann surface and they can be reached by
{\emph{compositions}} of alien derivations, which give access to all the
sheets of the Riemann surface of $\widehat{W} (\lambda)$.

For $k = 2$, $G (\lambda) : = W' (\lambda)$ is solution of a Riccati equation,
as it is well known; there are only 2 acting alien derivations $\Delta_{\pm
    u}$ with $u = - 1 / 16$ and the singularities of the Borel transform of $G$
are on $\frac{1}{16} \mathbb{Z}$: there is an infinite number of
singularities but their set is discrete.

For $k > 2$, the situation is not as simple: the function $W (\lambda)$ has
an infinity of singularities which might project on a dense subset in the Borel
plane (see proposition below). The subtle point, however, is that it does not 
prevent $\widehat{W}$ from
being resurgent: in the first sheet (the star of holomorphy), standard results
(see Ref.~\onlinecite{sauzin}, section 5.13)
indeed ensure that it has a finite set of singular directions, with isolated
singularities on these directions and exponential growth at infinity on the
non--singular ones. The other singularities are in the other sheets and
accessible by composition of alien derivations:
\[ \Delta_{\omega_r} \ldots \Delta_{\omega_1} \]
and ultimately, the highly ramified structure of the Riemann surface
$\mathcal{S}$ of $\widehat{W}$ and the behaviour of $W$ when reaching its 
singularities
on $\mathcal{S}$ is totally encoded by the resurgent structure of the
partition function which is as we have seen above quite explicit. 
It is in such a non--linear context that the alien derivations show all their 
efficiency, to explore the surface $\mathcal{S}$ and describe the singularities 
on all its sheets, which eventually govern the analytic properties of the 
function $W$.

Let us end this section with a more precise statement on the sets of 
singularities:
\begin{proposition} For $k>2$ the set of singularities of $W$ is an additive 
subgroup of $\mathbb{C}$ which is discrete if $k\in \lbrace 3,4,5,7 \rbrace$ 
and dense otherwise.
\end{proposition}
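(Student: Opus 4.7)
The plan is to exhibit the subgroup of singularities of $W$ explicitly as a rescaling of the ring of integers of a cyclotomic field, and then to characterize discreteness by combining its $\mathbb{Z}$-rank with its multiplicative structure. By the bridge equation for $(E_k)$ from the previous section, the singularities of $\widetilde{Z_0}=\widetilde{\Phi_0}$ in the Borel plane lie on the set of non-zero differences $u_i-u_j$ of the roots of
\[ P(u)=-mu\bigl[(-1)^{k-1}(2k)^k(mu)^{k-1}+1\bigr], \]
and the set of singularities of $W$ is the additive subgroup $G_k$ of $\mathbb{C}$ they generate. Since $u_0=0$ is one of the roots, $G_k=\mathbb{Z}u_1+\cdots+\mathbb{Z}u_{k-1}$. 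The $k-1$ non-zero roots satisfy $(mu)^{k-1}=(-1)^k/(2k)^k$ and hence form the orbit of some fixed $r\in\mathbb{C}^{\ast}$ under multiplication by $\zeta:=\zeta_{k-1}=e^{2i\pi/(k-1)}$; explicitly, $\{u_1,\dots,u_{k-1}\}=\{r,r\zeta,\dots,r\zeta^{k-2}\}$. Consequently $G_k=r\cdot\mathbb{Z}[\zeta_{k-1}]$.

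The discrete case follows at once. The abelian group $\mathbb{Z}[\zeta_{k-1}]$ is the ring of integers of the cyclotomic field $\mathbb{Q}(\zeta_{k-1})$, free of $\mathbb{Z}$-rank $\varphi(k-1)$ where $\varphi$ is Euler's totient function, and discreteness/density are preserved by the nonzero rescaling by $r$. If $\varphi(k-1)\leqslant 2$, the $\mathbb{Z}$-basis $(1,\zeta_{k-1})$ (or just $\{1\}$) is $\mathbb{R}$-linearly independent in $\mathbb{C}$, so $\mathbb{Z}[\zeta_{k-1}]$ is a discrete subgroup. The integers $n\geqslant 1$ with $\varphi(n)\leqslant 2$ are exactly $\{1,2,3,4,6\}$, so for $k>2$ the set of values yielding $G_k$ discrete is precisely $k\in\{3,4,5,7\}$.

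For $k>2$ with $\varphi(k-1)\geqslant 3$ the subgroup is not discrete, since a finitely generated subgroup of $\mathbb{R}^2$ of $\mathbb{Z}$-rank $\geqslant 3$ cannot be discrete. The main obstacle is that this rank obstruction alone does not yield density: a subgroup like $\mathbb{Z}(1,0)+\mathbb{Z}(\sqrt{2},0)+\mathbb{Z}(0,1)$ has rank $3$ in $\mathbb{R}^2$ but is contained in a countable union of parallel lines. The additional input is that $\mathbb{Z}[\zeta_{k-1}]$ is stable under multiplication by $\zeta_{k-1}$, hence so is its topological closure $H\subseteq\mathbb{C}$. By the classification of closed subgroups of $\mathbb{R}^2$, $H$ is isomorphic to $\mathbb{Z}^a\times\mathbb{R}^b$ with $a+b\leqslant 2$; that is, $H$ is either discrete (excluded by the rank bound), equal to $\mathbb{C}$, a real line $\mathbb{R}v$, or a stack $\mathbb{Z}w+\mathbb{R}v$. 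In the two proper non-discrete cases, $H$ contains a real line $\mathbb{R}v$ through the origin, whose invariance under multiplication by $\zeta_{k-1}$ would force $\zeta_{k-1}\in\mathbb{R}$, contradicting $k-1\geqslant 3$. Hence $H=\mathbb{C}$ and $G_k$ is dense, which completes the proof.
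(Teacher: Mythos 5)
Your proof is correct and takes a genuinely different route from the paper's. The paper establishes only one direction, arguing that \emph{if} the subgroup $L$ is discrete then rotational invariance by $\frac{2\pi}{k-1}$ plus the crystallographic restriction theorem forces $\frac{2\pi}{k-1}\in\{\pi,\frac{2\pi}{3},\frac{\pi}{2},\frac{\pi}{3}\}$, i.e.\ $k\in\{3,4,5,7\}$; it quotes the crystallographic theorem as a black box and does not actually prove density in the complementary cases, nor does it verify that $L$ is discrete for $k\in\{3,4,5,7\}$. You instead identify the group explicitly as $r\cdot\mathbb{Z}[\zeta_{k-1}]$, then read off discreteness from the $\mathbb{Z}$-rank $\varphi(k-1)$ of the ring of cyclotomic integers: $\varphi(k-1)\leqslant 2$ (equivalently $k-1\in\{1,2,3,4,6\}$) gives a genuine lattice, while $\varphi(k-1)\geqslant 3$ rules out discreteness by a pure rank count. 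This cyclotomic/totient argument is, in effect, a self-contained proof of the crystallographic restriction the paper invokes. More importantly, you supply the density argument that the paper omits: a non-discrete rotationally-invariant closed subgroup of $\mathbb{C}$ must be all of $\mathbb{C}$, because the classification of closed subgroups of $\mathbb{R}^2$ leaves only a line or a stack of lines as intermediate cases, and invariance of the distinguished line under multiplication by $\zeta_{k-1}$ would force $\zeta_{k-1}\in\mathbb{R}$. The one small imprecision is the phrase that the first-sheet singularities of $\widetilde{Z_0}=\widetilde{\Phi_0}$ ``lie on the set of non-zero differences $u_i-u_j$'': strictly, the bridge equation for $\widetilde{\Phi_0}$ only gives singularities at $u_0-u_j=-u_j$, with the other differences arising for the companions $\widetilde{\Phi_i}$ or on deeper sheets; but since $\pm u_j$ generate the same subgroup this does not affect the conclusion, and your identification $G_k=\mathbb{Z}u_1+\cdots+\mathbb{Z}u_{k-1}$ agrees with the paper's group $L$ (which the paper obtains as the saturation of the semigroup $\sum u_j\mathbb{N}$ using $\sum_j u_j=0$).
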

\begin{proof}
    Let us recall that the singularities in the Borel plane of $\widetilde{Z_0} 
    (\lambda)$ are the roots $u_1,\dots, u_{k-1}$ of the polynomial :
    $$Q (u) =(- 1)^{k - 1} (2 k)^k (m u)^{k - 1} + 1.$$
    The powers of $\widetilde{Z_0} (\lambda)_1$ in $W(\lambda) = \log 
    \widetilde{Z_0}(\lambda)$ will generate in the Borel plane singularities 
    for $W$ in the set 
    $$\Omega = u_1 \mathbb{N}+\dots u_{k-1}\mathbb{N}$$ which is a priori an 
    additive semigroup in $\mathbb{C}$. This set coincides with the additive 
    subgroup $L$ of $\mathbb{C}$ generated by $u_1,\dots, u_{k-1}$: 
    $\Omega\subset L$ but, since $u_1+\dots+u_{k-1}=0$ any element of $L$ can 
    be written as a linear combination of $u_1,\dots, u_{k-1}$ with 
    coefficients in $\mathbb{N}$.
    
    The roots of $Q$ are invariant by the rotation of angle $\frac{2\pi}{k-1}$ 
    so
    that, whenever $L$ is discrete, it is a lattice in $\mathbb{C}$ invariant by
    this rotation. A classical result on lattices\cite{Coxeter}, sometimes 
    called
    the Crystallographic restriction theorem, ensures that the only possible 
    angles
    are:
    \[ \frac{2\pi}{k-1}=\pi, \frac{2\pi}{3},  \frac{\pi}{2}, \frac{\pi}{3}, \]
    thus $k=3,4,5,7$.
\end{proof}
 
\section{Airy's equation and coequational resurgence.}

The stationary Schr{\"o}dinger equation in one dimension, with a polynomial
potential $W$ is:
\begin{equation}\label{eq:schrod}
\hbar^2 \psi'' (q) - V (q) \psi (q) = 0.
\end{equation}
Around 1980, A. Voros had discovered that the expansions in $\hbar$ of the
solutions of \eqref{eq:schrod} display a resurgent behavior, with a specific 
pattern and
beautiful algebraic properties (``Voros coefficients, Voros algebra''); this
was enhanced by Pham and his school and has lately been the object of numerous
articles (e.g. Ref.~\onlinecite{IWAKI14}).

However, all these works admittedly\cite{VOROS_BOURB83, VOROS83,PHAMDD93}
presupposed
the resurgence of the series -- a fact that for which there was convincing 
numerical evidence. J. Ecalle had given a clear method to establish the 
resurgent character with respect to the critical
time $z = 1 / \hbar$ of the series (which he dubbed ``coequational
resurgence'', to stress on a kind of duality with resurgence properties for
the dynamical variable deduced from $q$), by systematic expansions involving
iterated integrals in the Borel plane yet it was never implemented in details 
for
practical cases.

Using the same ideas as in the previous sections, we show below that, for the
solutions to Airy's equation (which corresponds to $V (q) = q$), the dependence 
in the parameter $\hbar$ is also governed by 
a linear ODE with polynomial coefficients, the variable $q$ being considered 
now as a parameter.

Let us consider Airy's equation %($\hbar^2 \psi'' (q) - q \psi (q) = 0$ )
\begin{equation}
\label{eq:Airy}
\hbar^2 \psi'' (q) - q \psi (q) = 0
\end{equation}
and let $\lambda=\hbar^2$. The solutions to \eqref{eq:Airy} have integral
solutions 
\[
\psi_{\gamma}(q,\lambda)=\int_{\gamma} e^{q\phi-\frac{\lambda}{3}\phi^3}d\phi,
\]
where $\gamma: \mathbb{R} \rightarrow \mathbb{C}$ is an infinite path such
that $\displaystyle \Re (\lambda\gamma(t)^3) 
\underset{|t|\to\infty}{\longrightarrow}+\infty$.
For the same kind of reasons as in theorem \ref{conv}, such an integral defines 
an analytic function in some half plane $H_{\gamma}$ and we have:
\[
\lambda\partial^2_q \psi_{\gamma}(q,\lambda) 
-q\psi_{\gamma}(q,\lambda)=\int_{\gamma} (\lambda \phi^2 
-q)e^{q\phi-\frac{\lambda}{3}\phi^3}d\phi=\left[-e^{q\phi-\frac{\lambda}{3}\phi^3}\right]_{\gamma}=0.
\]

Once $\gamma$ is fixed, we can define analogs of ``moments'' (we view now $q$ 
as a parameter that we omit in the notation):
\begin{equation*}
\forall j \in \mathbb{N},\ \forall \lambda\in H_{\gamma},\quad
Z_j(\lambda)=\int_{\gamma} \phi^j e^{q\phi-\frac{\lambda}{3}\phi^3}d\phi 
\end{equation*}
and, as in proposition \ref{rec}, we get
\begin{equation*}
\forall j \geqslant 0, \quad
\partial_{\lambda}Z_j=Z'_j = - \frac{1}{3} Z_{j +3}
%\label{rec1:airy}
\end{equation*}
and
\begin{equation*}
\forall j \geqslant 0, \quad
j  Z_{j-1} = \lambda Z_{j + 2} -q Z_{j},
%\label{rec2:airy}
\end{equation*}
the latter equations being obtained by integration by parts, with the 
convention $j  Z_{j-1}=0$ for $j=0$ so that $\lambda Z_2=qZ_0$.
Combining the previous equations, with
$\theta_{\lambda}=\lambda \partial_{\lambda}$, yields:
\begin{equation*}
\forall j \geqslant 0, \quad
(3\theta_{\lambda}+j+1)Z_j = 3\lambda Z'_j+(j+1)Z_j
= -\lambda Z_{j + 3}+(j+1)Z_j = -qZ_{j+1}
\end{equation*}
so that we get once again a governing equation for $Z_0$:
\begin{equation}
\label{eq:Gov_Airy}
(3\theta_{\lambda}+2)(3\theta_{\lambda}+1)Z_0
=-q(3\theta_{\lambda}+2)Z_1=q^2Z_2
=\frac{q^3}{\lambda}Z_0.
\end{equation}
As a function of the parameter $\lambda=\hbar^2$, the solutions of Airy's 
equation satisfy a differential equation for which the previous machinery works.
From the formal point of view, assuming that $q\not=0$, we get the Newton
polygon depicted in Fig.~\ref{fig:NP_Airy}.
\begin{figure}
    \begin{center}
        \includegraphics{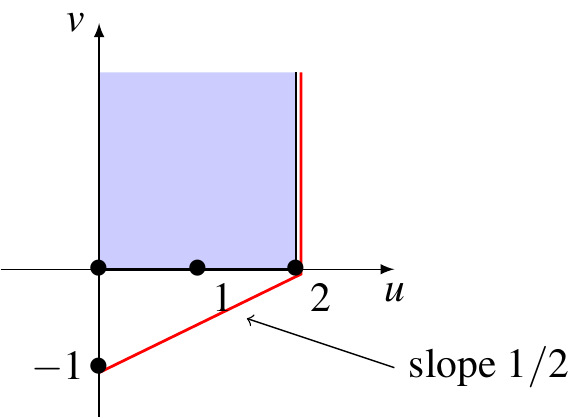}
    \end{center}
    \caption{\label{fig:NP_Airy}
        NP at $0$ for eq.~\eqref{eq:Gov_Airy}, provided $q\neq 0$, with
        $\lambda = \hbar^2$ as variable.
    }
\end{figure}
This suggests the change of variable $x=\lambda^{1/2}$ (that is $x=\hbar$) so 
that the function $f(x)=Z_0(x^2)$ satisfies the equation
\begin{equation*}
\left(\frac{3}{2}\theta_{x}+2\right)\left(\frac{3}{2}\theta_{x}+1\right)f(x)
=\frac{q^3}{x^2}f(x)
\end{equation*}
since $\theta_{\lambda}=\frac{1}{2}\theta_x$. We multiply by 4 and observe that 
now, the NP of the equation 
\begin{equation}\label{eq:airyx}
\left(3\theta_{x}+4\right)\left(3\theta_{x}+2\right)f=\frac{4q^3}{x^2}f(x)
\end{equation}
has a unique slope equal to 1.
We don't get immediately formal solutions but we
perform the change of unknown function $f(x)=e^{u/x}g(x)$ so that
\begin{equation*}
\left(\left(3\theta_{x}+4-3\frac{u}{x}\right)
\left(3\theta_{x}+2-3\frac{u}{x}\right)-\frac{4q^3}{x^2}\right) g(x)=0
\end{equation*}
that also reads
\begin{equation}
\label{eq:A_u}
\left(9 \theta_x^2 +\left(18-\frac{18u}{x}\right)
\theta_x+\left(8-\frac{9u}{x}+\frac{9u^2-4q^3}{x^2}\right)\right)g(x)=0.
\end{equation}
\begin{figure}
    \begin{center}
        \includegraphics{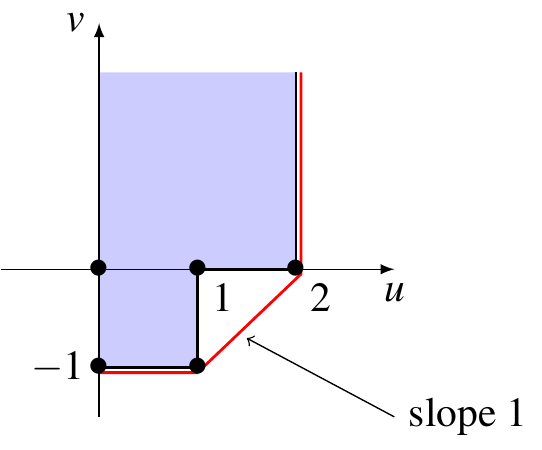}
    \end{center}
    \caption{\label{fig:A_u}
        NP at $0$ for eq.~\eqref{eq:A_u} for $u=u_{\pm}=\pm\frac{2}{3}q^{3/2}$,
        notice its horizontal slope followed
        by its positive slope $1$.
    }
\end{figure}
For $u=u_{\pm}=\pm\frac{2}{3}q^{3/2}$ the NP is given in
Fig.~\ref{fig:A_u}
and the indicial equation gives $\beta=-1/2$ so that the general formal solution
of equation \eqref{eq:airyx} is  
\begin{equation}
f(x)=c_+ x^{-1/2}e^\frac{u_+}{x}h_+(x)+c_- x^{-1/2}e^\frac{u_-}{x}h_-(x)\qquad 
(c_+,c_-\in \mathbb{C})
\end{equation}
where $h_+, h_-$ are formal Gevrey--1 series that are solutions of the 
equations:
%%%% Passage de g à h par le facteur x^-(1/2)

%\begin{equation}
%\left(9 \theta_x^2 
%+\left(18-\frac{18u_{\pm}}{x}\right)\theta_x+\left(8-\frac{9u_{\pm}}{x}\right)\right)g(x)=0
%\end{equation}
%\begin{equation}
%\left(9 (\theta_x^2-\theta_x+1/4) 
%+\left(18-\frac{18u_{\pm}}{x}\right)(\theta_x-1/2)+\left(8-\frac{9u_{\pm}}{x}\right)\right)g(x)=0
%\end{equation}
\begin{equation*}
\left(9 \theta_x^2  +\left(9-\frac{18u_{\pm}}{x}\right)\theta_x
+\frac{5}{4}\right)h_\pm(x)=0.
\end{equation*}
If $D_x=x^2\partial_x$, $\theta_x=x^{-1} D_x$ so that equation also read:
\begin{align*}
H_\pm.h_\pm 
& = \left(9 \theta_x^2 +\left(9-\frac{18u_{\pm}}{x}\right)\theta_x 
+ \frac{5}{4}\right)h_\pm \\
& = \left( \frac{9}{x^2}D_x^2- \frac{9}{x}D_x
+\frac{1}{x} \left(9-\frac{18u_{\pm}}{x}\right)D_x
+\frac{5}{4}\right) h_\pm \\
& = \left(\frac{9}{x^2}D_x^2 -\frac{18u_{\pm}}{x^2}D_x 
+\frac{5}{4}\right)h_\pm.
\end{align*}
The Borel transform of this equation is 
\begin{align*}
\widehat{H}_\pm \hat{h}_\pm 
& = 9(\zeta^2 \hat{h}_\pm)'' -18u_\pm (\zeta \hat{h}_\pm)''
+\frac{5}{4}\hat{h}_\pm\\
& = (9\zeta^2-18u_\pm\zeta)\hat{h}_\pm''+36(\zeta -u_\pm) \hat{h}_\pm' 
+\frac{77}{4}\hat{h}_\pm.
\end{align*}
The general theory of linear differential equations, together with the NP at 
$\zeta=0$ ensures that the Borel transform of the series $h_\pm$ are analytic 
with $\zeta=2u_\pm$ as unique singularity. The NP at $\zeta=\infty$ also 
ensures that, in any direction avoiding $u_\pm$, the Borel transform has a 
polynomial growth. As in the previous sections, the location of the 
singularities could also be deduced from the associated bridge equation.

In fact, for this equation, all these properties are very well known; the 2 
variables $z$ and $x$ are directly coupled
and the resurgence in $x$ is thus also a consequence of the ``equational
resurgence'' in the dynamical variable $q$, through a rescaling.
For other polynomial
potentials, the challenge
consists in the working out of relevant $x$--dependent integrals in order to 
apply the techniques developed in the present
paper.

\begin{acknowledgments}

The authors wish to thank the ``GDR Renormalisation'' for bringing them together
and Dominique Manchon for organizing the  ``Groupe de travail de Besse'' where 
this article finds its origin.

\end{acknowledgments}

%\bibliography{References}

%merlin.mbs aipnum4-1.bst 2010-07-25 4.21a (PWD, AO, DPC) hacked
%Control: key (0)
%Control: author (8) initials jnrlst
%Control: editor formatted (1) identically to author
%Control: production of article title (0) allowed
%Control: page (1) range
%Control: year (1) truncated
%Control: production of eprint (0) enabled
%
\end{document}